\newtheorem{theorem}{Theorem}
\newtheorem{corollary}{Corollary}
\newtheorem{lemma}{Lemma}
\newtheorem{proposition}{Proposition}
\theoremstyle{definition}
\newtheorem{definition}{Definition}
\newtheorem{assumption}{Assumption}
\theoremstyle{remark}
\newtheorem{example}{Example}
\DeclareMathOperator{\E}{E}
\chardef\@x10\chardef\@xv60
\def\tcitime{
\def\@time{%
  \@minute\time\@hour\@minute\divide\@hour\@xv
  \ifnum\@hour<\@x 0\fi\the\@hour:%
  \multiply\@hour\@xv\advance\@minute-\@hour
  \ifnum\@minute<\@x 0\fi\the\@minute
  }}%
\def\QCTOpt[#1]#2{%
  \def\QCTOptB{#1}
  \def\QCTOptA{#2}
}
\def\QCTNOpt#1{%
  \def\QCTOptA{#1}
  \let\QCTOptB\empty
}
\def\Qct{%
  \@ifnextchar[{%
    \QCTOpt}{\QCTNOpt}
}
\def\QCBOpt[#1]#2{%
  \def\QCBOptB{#1}
  \def\QCBOptA{#2}
}
\def\QCBNOpt#1{%
  \def\QCBOptA{#1}
  \let\QCBOptB\empty
}
\def\Qcb{%
  \@ifnextchar[{%
    \QCBOpt}{\QCBNOpt}
}
\def\PrepCapArgs{%
  \ifx\QCBOptA\empty
    \ifx\QCTOptA\empty
      {}%
    \else
      \ifx\QCTOptB\empty
        {\QCTOptA}%
      \else
        [\QCTOptB]{\QCTOptA}%
      \fi
    \fi
  \else
    \ifx\QCBOptA\empty
      {}%
    \else
      \ifx\QCBOptB\empty
        {\QCBOptA}%
      \else
        [\QCBOptB]{\QCBOptA}%
      \fi
    \fi
  \fi
}
\def\GRAPHICSPS#1{%
 \ifcase\GRAPHICSTYPE
   \special{ps: #1}%
 \or
   \special{language "PS", include "#1"}%
 \fi
}%
\def\graffile#1#2#3#4{%
    \leavevmode
    \raise -#4 \BOXTHEFRAME{%
        \hbox to #2{\raise #3\hbox to #2{\null #1\hfil}}}%
}%
\def\draftbox#1#2#3#4{%
 \leavevmode\raise -#4 \hbox{%
  \frame{\rlap{\protect\tiny #1}\hbox to #2%
   {\vrule height#3 width\z@ depth\z@\hfil}%
  }%
 }%
}%
\newif\ifwasdraft
\def\GRAPHIC#1#2#3#4#5{%
 \ifnum\draft=\@ne\draftbox{#2}{#3}{#4}{#5}%
  \else\graffile{#1}{#3}{#4}{#5}%
  \fi
 }%
\def\addtoLaTeXparams#1{%
    \edef\LaTeXparams{\LaTeXparams #1}}%
\newif\ifBoxFrame \BoxFramefalse
\newif\ifOverFrame \OverFramefalse
\newif\ifUnderFrame \UnderFramefalse
\def\BOXTHEFRAME#1{%
   \hbox{%
      \ifBoxFrame
         \frame{#1}%
      \else
         {#1}%
      \fi
   }%
}
\def\doFRAMEparams#1{\BoxFramefalse\OverFramefalse\UnderFramefalse\readFRAMEparams#1\end}%
\def\readFRAMEparams#1{%
 \ifx#1\end%
  \let\next=\relax
  \else
  \ifx#1i\dispkind=\z@\fi
  \ifx#1d\dispkind=\@ne\fi
  \ifx#1f\dispkind=\tw@\fi
  \ifx#1t\addtoLaTeXparams{t}\fi
  \ifx#1b\addtoLaTeXparams{b}\fi
  \ifx#1p\addtoLaTeXparams{p}\fi
  \ifx#1h\addtoLaTeXparams{h}\fi
  \ifx#1X\BoxFrametrue\fi
  \ifx#1O\OverFrametrue\fi
  \ifx#1U\UnderFrametrue\fi
  \ifx#1w
    \ifnum\draft=1\wasdrafttrue\else\wasdraftfalse\fi
    \draft=\@ne
  \fi
  \let\next=\readFRAMEparams
  \fi
 \next
 }%
\def\IFRAME#1#2#3#4#5#6{%
      \bgroup
      \let\QCTOptA\empty
      \let\QCTOptB\empty
      \let\QCBOptA\empty
      \let\QCBOptB\empty
      #6%
      \parindent=0pt%
      \leftskip=0pt
      \rightskip=0pt
      \setbox0 = \hbox{\QCBOptA}%
      \@tempdima = #1\relax
      \ifOverFrame
          \typeout{This is not implemented yet}%
          \show\HELP
      \else
         \ifdim\wd0>\@tempdima
            \advance\@tempdima by \@tempdima
            \ifdim\wd0 >\@tempdima
               \textwidth=\@tempdima
               \setbox1 =\vbox{%
                  \noindent\hbox to \@tempdima{\hfill\GRAPHIC{#5}{#4}{#1}{#2}{#3}\hfill}\\%
                  \noindent\hbox to \@tempdima{\parbox[b]{\@tempdima}{\QCBOptA}}%
               }%
               \wd1=\@tempdima
            \else
               \textwidth=\wd0
               \setbox1 =\vbox{%
                 \noindent\hbox to \wd0{\hfill\GRAPHIC{#5}{#4}{#1}{#2}{#3}\hfill}\\%
                 \noindent\hbox{\QCBOptA}%
               }%
               \wd1=\wd0
            \fi
         \else
            \ifdim\wd0>0pt
              \hsize=\@tempdima
              \setbox1 =\vbox{%
                \unskip\GRAPHIC{#5}{#4}{#1}{#2}{0pt}%
                \break
                \unskip\hbox to \@tempdima{\hfill \QCBOptA\hfill}%
              }%
              \wd1=\@tempdima
           \else
              \hsize=\@tempdima
              \setbox1 =\vbox{%
                \unskip\GRAPHIC{#5}{#4}{#1}{#2}{0pt}%
              }%
              \wd1=\@tempdima
           \fi
         \fi
         \@tempdimb=\ht1
         \advance\@tempdimb by \dp1
         \advance\@tempdimb by -#2%
         \advance\@tempdimb by #3%
         \leavevmode
         \raise -\@tempdimb \hbox{\box1}%
      \fi
      \egroup%
}%
\def\DFRAME#1#2#3#4#5{%
 \begin{center}
     \let\QCTOptA\empty
     \let\QCTOptB\empty
     \let\QCBOptA\empty
     \let\QCBOptB\empty
     \ifOverFrame 
        #5\QCTOptA\par
     \fi
     \GRAPHIC{#4}{#3}{#1}{#2}{\z@}
     \ifUnderFrame 
        \nobreak\par #5\QCBOptA
     \fi
 \end{center}%
 }%
\def\FFRAME#1#2#3#4#5#6#7{%
 \begin{figure}[#1]%
  \let\QCTOptA\empty
  \let\QCTOptB\empty
  \let\QCBOptA\empty
  \let\QCBOptB\empty
  \ifOverFrame
    #4
    \ifx\QCTOptA\empty
    \else
      \ifx\QCTOptB\empty
        \caption{\QCTOptA}%
      \else
        \caption[\QCTOptB]{\QCTOptA}%
      \fi
    \fi
    \ifUnderFrame\else
      \label{#5}%
    \fi
  \else
    \UnderFrametrue%
  \fi
  \begin{center}\GRAPHIC{#7}{#6}{#2}{#3}{\z@}\end{center}%
  \ifUnderFrame
    #4
    \ifx\QCBOptA\empty
      \caption{}%
    \else
      \ifx\QCBOptB\empty
        \caption{\QCBOptA}%
      \else
        \caption[\QCBOptB]{\QCBOptA}%
      \fi
    \fi
    \label{#5}%
  \fi
  \end{figure}%
 }%
\def\makeactives{
  \catcode`\"=\active
  \catcode`\;=\active
  \catcode`\:=\active
  \catcode`\'=\active
  \catcode`\~=\active
}
   \gdef\activesoff{%
      \def"{\string"}
      \def;{\string;}
      \def:{\string:}
      \def'{\string'}
      \def~{\string~}
    }
\def\FRAME#1#2#3#4#5#6#7#8{%
 \bgroup
 \@ifundefined{bbl@deactivate}{}{\activesoff}
 \ifnum\draft=\@ne
   \wasdrafttrue
 \else
   \wasdraftfalse%
 \fi
 \def\LaTeXparams{}%
 \dispkind=\z@
 \def\LaTeXparams{}%
 \doFRAMEparams{#1}%
 \ifnum\dispkind=\z@\IFRAME{#2}{#3}{#4}{#7}{#8}{#5}\else
  \ifnum\dispkind=\@ne\DFRAME{#2}{#3}{#7}{#8}{#5}\else
   \ifnum\dispkind=\tw@
    \edef\@tempa{\noexpand\FFRAME{\LaTeXparams}}%
    \@tempa{#2}{#3}{#5}{#6}{#7}{#8}%
    \fi
   \fi
  \fi
  \ifwasdraft\draft=1\else\draft=0\fi{}%
  \egroup
 }%
\def\TEXUX#1{"texux"}
\long\def\QQQ#1#2{%
     \long\expandafter\def\csname#1\endcsname{#2}}%
\long\def\QQA#1#2{}%
\def\QTR#1#2{{\csname#1\endcsname #2}}
\def\EXPAND#1[#2]#3{}%
\def\NOEXPAND#1[#2]#3{}%
\def\LaTeXparent#1{}%
\def\ChildStyles#1{}%
\def\ChildDefaults#1{}%
\def\QTagDef#1#2#3{}%
\def\QQfnmark#1{\footnotemark}
\def\makeatletter\input gnuindex.sty\makeatother\makeindex{\makeatletter\input gnuindex.sty\makeatother\makeindex}%
\def\initial#1{\bigbreak{\raggedright\large\bf #1}\kern 2\p@\penalty3000}}%
 \def\abstract{%
  \if@twocolumn
   \section*{Abstract (Not appropriate in this style!)}%
   \else \small 
   \begin{center}{\bf Abstract\vspace{-.5em}\vspace{\z@}}\end{center}%
   \quotation 
   \fi
  }%
   \def\registered{\relax\ifmmode{}\r@gistered
                    \else$\m@th\r@gistered$\fi}%
 \def\r@gistered{^{\ooalign
  {\hfil\raise.07ex\hbox{$\scriptstyle\rm\text{R}$}\hfil\crcr
  \mathhexbox20D}}}}{}%
\newdimen\theight
\def\Column{%
 \vadjust{\setbox\z@=\hbox{\scriptsize\quad\quad tcol}%
  \theight=\ht\z@\advance\theight by \dp\z@\advance\theight by \lineskip
  \kern -\theight \vbox to \theight{%
   \rightline{\rlap{\box\z@}}%
   \vss
   }%
  }%
 }%
\def\qed{%
 \ifhmode\unskip\nobreak\fi\ifmmode\ifinner\else\hskip5\p@\fi\fi
 \hbox{\hskip5\p@\vrule width4\p@ height6\p@ depth1.5\p@\hskip\p@}%
 }%
\def\miss{\hbox{\vrule height2\p@ width 2\p@ depth\z@}}%
\def\tcol#1{{\baselineskip=6\p@ \vcenter{#1}} \Column}  %
\def\newfmtname{LaTeX2e}
\def\chkcompat{%
   \if@compatibility
   \else
     \usepackage{latexsym}
   \fi
}
  \DeclareOldFontCommand{\rm}{\normalfont\rmfamily}{\mathrm}
  \DeclareOldFontCommand{\sf}{\normalfont\sffamily}{\mathsf}
  \DeclareOldFontCommand{\tt}{\normalfont\ttfamily}{\mathtt}
  \DeclareOldFontCommand{\bf}{\normalfont\bfseries}{\mathbf}
  \DeclareOldFontCommand{\it}{\normalfont\itshape}{\mathit}
  \DeclareOldFontCommand{\sl}{\normalfont\slshape}{\@nomath\sl}
  \DeclareOldFontCommand{\sc}{\normalfont\scshape}{\@nomath\sc}
\def\alpha{{\Greekmath 010B}}%
\def\beta{{\Greekmath 010C}}%
\def\gamma{{\Greekmath 010D}}%
\def\delta{{\Greekmath 010E}}%
\def\epsilon{{\Greekmath 010F}}%
\def\zeta{{\Greekmath 0110}}%
\def\eta{{\Greekmath 0111}}%
\def\theta{{\Greekmath 0112}}%
\def\iota{{\Greekmath 0113}}%
\def\kappa{{\Greekmath 0114}}%
\def\lambda{{\Greekmath 0115}}%
\def\mu{{\Greekmath 0116}}%
\def\nu{{\Greekmath 0117}}%
\def\xi{{\Greekmath 0118}}%
\def\pi{{\Greekmath 0119}}%
\def\rho{{\Greekmath 011A}}%
\def\sigma{{\Greekmath 011B}}%
\def\tau{{\Greekmath 011C}}%
\def\upsilon{{\Greekmath 011D}}%
\def\phi{{\Greekmath 011E}}%
\def\chi{{\Greekmath 011F}}%
\def\psi{{\Greekmath 0120}}%
\def\omega{{\Greekmath 0121}}%
\def\varepsilon{{\Greekmath 0122}}%
\def\vartheta{{\Greekmath 0123}}%
\def\varpi{{\Greekmath 0124}}%
\def\varrho{{\Greekmath 0125}}%
\def\varsigma{{\Greekmath 0126}}%
\def\varphi{{\Greekmath 0127}}%
\def\nabla{{\Greekmath 0272}}
\def\FindBoldGroup{%
   {\setbox0=\hbox{$\mathbf{x\global\edef\theboldgroup{\the\mathgroup}}$}}%
}
\def\Greekmath#1#2#3#4{%
    \if@compatibility
        \ifnum\mathgroup=\symbold
           \mathchoice{\mbox{\boldmath$\displaystyle\mathchar"#1#2#3#4$}}%
                      {\mbox{\boldmath$\textstyle\mathchar"#1#2#3#4$}}%
                      {\mbox{\boldmath$\scriptstyle\mathchar"#1#2#3#4$}}%
                      {\mbox{\boldmath$\scriptscriptstyle\mathchar"#1#2#3#4$}}%
        \else
           \mathchar"#1#2#3#4%
        \fi 
    \else 
        \FindBoldGroup
        \ifnum\mathgroup=\theboldgroup 
           \mathchoice{\mbox{\boldmath$\displaystyle\mathchar"#1#2#3#4$}}%
                      {\mbox{\boldmath$\textstyle\mathchar"#1#2#3#4$}}%
                      {\mbox{\boldmath$\scriptstyle\mathchar"#1#2#3#4$}}%
                      {\mbox{\boldmath$\scriptscriptstyle\mathchar"#1#2#3#4$}}%
        \else
           \mathchar"#1#2#3#4%
        \fi     	    
	  \fi}
\newif\ifGreekBold  \GreekBoldfalse
\let\SAVEPBF=\pbf
\def\pbf{\GreekBoldtrue\SAVEPBF}%
  \newcounter{equationnumber}  
  \def\mathletters{%
     \addtocounter{equation}{1}
     \edef\@currentlabel{\theequation}%
     \setcounter{equationnumber}{\c@equation}
     \setcounter{equation}{0}%
     \edef\theequation{\@currentlabel\noexpand\alph{equation}}%
  }
    \def\BibTeX{{\rm B\kern-.05em{\sc i\kern-.025em b}\kern-.08em
                 T\kern-.1667em\lower.7ex\hbox{E}\kern-.125emX}}}{}%
\def\AmS{{\protect\usefont{OMS}{cmsy}{m}{n}%
                A\kern-.1667em\lower.5ex\hbox{M}\kern-.125emS}}}{}%
\let\DOTSI\relax
\def\RIfM@{\relax\ifmmode}%
\def\FN@{\futurelet\next}%
\def\iint{\DOTSI\intno@\tw@\FN@\ints@}%
\def\iiint{\DOTSI\intno@\thr@@\FN@\ints@}%
\def\iiiint{\DOTSI\intno@4 \FN@\ints@}%
\def\idotsint{\DOTSI\intno@\z@\FN@\ints@}%
\def\ints@{\findlimits@\ints@@}%
\newif\iflimtoken@
\newif\iflimits@
\def\findlimits@{\limtoken@true\ifx\next\limits\limits@true
 \else\ifx\next\nolimits\limits@false\else
 \limtoken@false\ifx\ilimits@\nolimits\limits@false\else
 \ifinner\limits@false\else\limits@true\fi\fi\fi\fi}%
\def\multint@{\int\ifnum\intno@=\z@\intdots@                          
 \else\intkern@\fi                                                    
 \ifnum\intno@>\tw@\int\intkern@\fi                                   
 \ifnum\intno@>\thr@@\int\intkern@\fi                                 
 \int}
\def\multintlimits@{\intop\ifnum\intno@=\z@\intdots@\else\intkern@\fi
 \ifnum\intno@>\tw@\intop\intkern@\fi
 \ifnum\intno@>\thr@@\intop\intkern@\fi\intop}%
\def\intic@{%
    \mathchoice{\hskip.5em}{\hskip.4em}{\hskip.4em}{\hskip.4em}}%
\def\negintic@{\mathchoice
 {\hskip-.5em}{\hskip-.4em}{\hskip-.4em}{\hskip-.4em}}%
\def\ints@@{\iflimtoken@                                              
 \def\ints@@@{\iflimits@\negintic@
   \mathop{\intic@\multintlimits@}\limits                             
  \else\multint@\nolimits\fi                                          
  \eat@}
 \else                                                                
 \def\ints@@@{\iflimits@\negintic@
  \mathop{\intic@\multintlimits@}\limits\else
  \multint@\nolimits\fi}\fi\ints@@@}%
\def\intkern@{\mathchoice{\!\!\!}{\!\!}{\!\!}{\!\!}}%
\def\plaincdots@{\mathinner{\cdotp\cdotp\cdotp}}%
\def\intdots@{\mathchoice{\plaincdots@}%
 {{\cdotp}\mkern1.5mu{\cdotp}\mkern1.5mu{\cdotp}}%
 {{\cdotp}\mkern1mu{\cdotp}\mkern1mu{\cdotp}}%
 {{\cdotp}\mkern1mu{\cdotp}\mkern1mu{\cdotp}}}%
\def\RIfM@{\relax\protect\ifmmode}
\def\text{\RIfM@\expandafter\text@\else\expandafter\mbox\fi}
\let\nfss@text\text
\def\text@#1{\mathchoice
   {\textdef@\displaystyle\f@size{#1}}%
   {\textdef@\textstyle\tf@size{\firstchoice@false #1}}%
   {\textdef@\textstyle\sf@size{\firstchoice@false #1}}%
   {\textdef@\textstyle \ssf@size{\firstchoice@false #1}}%
   \glb@settings}
\def\textdef@#1#2#3{\hbox{{%
                    \everymath{#1}%
                    \let\f@size#2\selectfont
                    #3}}}
\newif\iffirstchoice@
\def\Let@{\relax\iffalse{\fi\let\\=\cr\iffalse}\fi}%
\def\vspace@{\def\vspace##1{\crcr\noalign{\vskip##1\relax}}}%
\def\multilimits@{\bgroup\vspace@\Let@
 \baselineskip\fontdimen10 \scriptfont\tw@
 \advance\baselineskip\fontdimen12 \scriptfont\tw@
 \lineskip\thr@@\fontdimen8 \scriptfont\thr@@
 \lineskiplimit\lineskip
 \vbox\bgroup\ialign\bgroup\hfil$\m@th\scriptstyle{##}$\hfil\crcr}%
\def\Sb{_\multilimits@}%
\def\endSb{\crcr\egroup\egroup\egroup}%
\def\Sp{^\multilimits@}%
\newdimen\ex@
\def\rightarrowfill@#1{$#1\m@th\mathord-\mkern-6mu\cleaders
 \hbox{$#1\mkern-2mu\mathord-\mkern-2mu$}\hfill
 \mkern-6mu\mathord\rightarrow$}%
\def\leftarrowfill@#1{$#1\m@th\mathord\leftarrow\mkern-6mu\cleaders
 \hbox{$#1\mkern-2mu\mathord-\mkern-2mu$}\hfill\mkern-6mu\mathord-$}%
\def\leftrightarrowfill@#1{$#1\m@th\mathord\leftarrow
\mkern-6mu\cleaders
 \hbox{$#1\mkern-2mu\mathord-\mkern-2mu$}\hfill
 \mkern-6mu\mathord\rightarrow$}%
\def\overrightarrow{\mathpalette\overrightarrow@}%
\def\overrightarrow@#1#2{\vbox{\ialign{##\crcr\rightarrowfill@#1\crcr
 \noalign{\kern-\ex@\nointerlineskip}$\m@th\hfil#1#2\hfil$\crcr}}}%
\def\overleftarrow{\mathpalette\overleftarrow@}%
\def\overleftarrow@#1#2{\vbox{\ialign{##\crcr\leftarrowfill@#1\crcr
 \noalign{\kern-\ex@\nointerlineskip}$\m@th\hfil#1#2\hfil$\crcr}}}%
\def\overleftrightarrow{\mathpalette\overleftrightarrow@}%
\def\overleftrightarrow@#1#2{\vbox{\ialign{##\crcr
   \leftrightarrowfill@#1\crcr
 \noalign{\kern-\ex@\nointerlineskip}$\m@th\hfil#1#2\hfil$\crcr}}}%
\def\underrightarrow{\mathpalette\underrightarrow@}%
\def\underrightarrow@#1#2{\vtop{\ialign{##\crcr$\m@th\hfil#1#2\hfil
  $\crcr\noalign{\nointerlineskip}\rightarrowfill@#1\crcr}}}%
\def\underleftarrow{\mathpalette\underleftarrow@}%
\def\underleftarrow@#1#2{\vtop{\ialign{##\crcr$\m@th\hfil#1#2\hfil
  $\crcr\noalign{\nointerlineskip}\leftarrowfill@#1\crcr}}}%
\def\underleftrightarrow{\mathpalette\underleftrightarrow@}%
\def\underleftrightarrow@#1#2{\vtop{\ialign{##\crcr$\m@th
  \hfil#1#2\hfil$\crcr
 \noalign{\nointerlineskip}\leftrightarrowfill@#1\crcr}}}%
\def\qopnamewl@#1{\mathop{\operator@font#1}\nlimits@}
\let\nlimits@\displaylimits
\def\setboxz@h{\setbox\z@\hbox}
\def\varlim@#1#2{\mathop{\vtop{\ialign{##\crcr
 \hfil$#1\m@th\operator@font lim$\hfil\crcr
 \noalign{\nointerlineskip}#2#1\crcr
 \noalign{\nointerlineskip\kern-\ex@}\crcr}}}}
 \def\rightarrowfill@#1{\m@th\setboxz@h{$#1-$}\ht\z@\z@
  $#1\copy\z@\mkern-6mu\cleaders
  \hbox{$#1\mkern-2mu\box\z@\mkern-2mu$}\hfill
  \mkern-6mu\mathord\rightarrow$}
\def\leftarrowfill@#1{\m@th\setboxz@h{$#1-$}\ht\z@\z@
  $#1\mathord\leftarrow\mkern-6mu\cleaders
  \hbox{$#1\mkern-2mu\copy\z@\mkern-2mu$}\hfill
  \mkern-6mu\box\z@$}
\def\projlim{\qopnamewl@{proj\,lim}}
\def\injlim{\qopnamewl@{inj\,lim}}
\def\varinjlim{\mathpalette\varlim@\rightarrowfill@}
\def\varprojlim{\mathpalette\varlim@\leftarrowfill@}
\def\varliminf{\mathpalette\varliminf@{}}
\def\varliminf@#1{\mathop{\underline{\vrule\@depth.2\ex@\@width\z@
   \hbox{$#1\m@th\operator@font lim$}}}}
\def\varlimsup{\mathpalette\varlimsup@{}}
\def\varlimsup@#1{\mathop{\overline
  {\hbox{$#1\m@th\operator@font lim$}}}}
\def\align{\@verbatim \frenchspacing\@vobeyspaces \@alignverbatim
You are using the "align" environment in a style in which it is not defined.}
\let\csname endalign*\endcsname =\endtrivlist
\def\alignat{\@verbatim \frenchspacing\@vobeyspaces \@alignatverbatim
You are using the "alignat" environment in a style in which it is not defined.}
\let\csname endalignat*\endcsname =\endtrivlist
\def\xalignat{\@verbatim \frenchspacing\@vobeyspaces \@xalignatverbatim
You are using the "xalignat" environment in a style in which it is not defined.}
\let\csname endxalignat*\endcsname =\endtrivlist
\def\gather{\@verbatim \frenchspacing\@vobeyspaces \@gatherverbatim
You are using the "gather" environment in a style in which it is not defined.}
\let\csname endgather*\endcsname =\endtrivlist
\def\multiline{\@verbatim \frenchspacing\@vobeyspaces \@multilineverbatim
You are using the "multiline" environment in a style in which it is not defined.}
\let\csname endmultiline*\endcsname =\endtrivlist
\def\arrax{\@verbatim \frenchspacing\@vobeyspaces \@arraxverbatim
You are using a type of "array" construct that is only allowed in AmS-LaTeX.}
\def\tabulax{\@verbatim \frenchspacing\@vobeyspaces \@tabulaxverbatim
You are using a type of "tabular" construct that is only allowed in AmS-LaTeX.}
\let\csname endarrax*\endcsname =\endtrivlist
\let\csname endtabulax*\endcsname =\endtrivlist
\def\@@eqncr{\let\@tempa\relax
    \ifcase\@eqcnt \def\@tempa{& & &}\or \def\@tempa{& &}%
      \else \def\@tempa{&}\fi
     \@tempa
     \if@eqnsw
        \iftag@
           \@taggnum
        \else
           \@eqnnum\stepcounter{equation}%
        \fi
     \fi
     \global\tag@false
     \global\@eqnswtrue
     \global\@eqcnt\z@\cr}
 \def\endequation{%
     \ifmmode\ifinner 
      \iftag@
        \addtocounter{equation}{-1} 
        $\hfil
           \displaywidth\linewidth\@taggnum\egroup \endtrivlist
        \global\tag@false
        \global\@ignoretrue   
      \else
        $\hfil
           \displaywidth\linewidth\@eqnnum\egroup \endtrivlist
        \global\tag@false
        \global\@ignoretrue 
      \fi
     \else   
      \iftag@
        \addtocounter{equation}{-1} 
        \eqno \hbox{\@taggnum}
        \global\tag@false%
        $$\global\@ignoretrue
      \else
        \eqno \hbox{\@eqnnum}
        $$\global\@ignoretrue
      \fi
     \fi\fi
 } 
 \newif\iftag@ \tag@false
 \def\tag{\@ifnextchar*{\@tagstar}{\@tag}}
 \def\@tag#1{%
     \global\tag@true
     \global\def\@taggnum{(#1)}}
 \def\@tagstar*#1{%
     \global\tag@true
     \global\def\@taggnum{#1}%
}
\begin{document}

\title{Optimal Pricing of Cloud Services: \\ Committed Spend under Demand
Uncertainty\thanks{%
Dirk Bergemann gratefully acknowledges financial support from NSF SES
2049754 and ONR MURI. We thank Scott Shenker for productive conversations.}}
\author{Dirk Bergemann\thanks{%
Department of Economics, Yale University, New Haven, CT 06511,
dirk.bergemann@yale.edu} \and Michael C. Wang\thanks{%
Department of Economics, Yale University, New Haven, CT 06511,
michael.wang.mcw75@yale.edu}}
\date{\today}
\maketitle

\begin{abstract}
We consider a seller who offers services to a buyer with multi-unit demand.
Prior to the realization of demand, the buyer receives a noisy signal of
their future demand, and the seller can design contracts based on the
reported value of this signal. Thus, the buyer can contract with the service
provider for an unknown level of future consumption, such as in the market
for cloud computing resources or software services. We characterize the
optimal dynamic contract, extending the classic sequential screening
framework to a nonlinear and multi-unit setting. The optimal mechanism
gives discounts to buyers who report higher signals, but in exchange they
must provide larger fixed payments. We then describe how the optimal
mechanism can be implemented by two common forms of contracts observed in
practice, the two-part tariff and the committed spend contract. Finally, we
use extensions of our base model to shed light on policy-focused questions,
such as analyzing how the optimal contract changes when the buyer faces
commitment costs, or when there are liquid spot markets.

\medskip \noindent \textsc{Keywords:} Sequential Screening, Second-Degree
Price Discrimination, Mechanism Design, Cloud Computing, Commitment Spend
Contract, Software as a Service Contract

\medskip

\noindent \textsc{JEL Classification:} D44, D82, D83.
\end{abstract}

\newpage

\section{Introduction}

\subsection{Motivation}

The digital economy increasingly runs on service contracts that commit
buyers to future purchases before they fully understand their needs. In 
\emph{committed spend} contracts, cloud computing providers like Amazon Web
Services, Microsoft Azure, and Google Cloud Platform offer substantial
discounts to customers who commit to minimum spending levels over a time
period of variable length. Similar arrangements appear in the enterprise
software licensing, data services, and API access contracts. These committed
spend agreements have attracted regulatory scrutiny, with the UK's
Competition and Markets Authority launching an investigation in 2023 over
concerns that they may reduce competition by locking customers into single
providers.

A prominent example of service contracts are contracts for software as a
service (SaaS) and cloud computing. These contracts often feature committed
spend agreements. ``Those are agreements between a cloud provider and a
customer in which the customer commits to spend a minimum amount across the
cloud provider's cloud services over a period of years, and in return,
receives a percentage discount on its spend with that provider during those
same years'' (\cite{coma24}). A third prominent example are service
agreements for data and LLMs that cover the use of data for training and the
use of the model and weights, respectively.

Yet the theoretical foundations for such contracts remain poorly understood.
While the basic economic intuition is clear---providers offer discounts in
exchange for reduced demand uncertainty---the optimal structure of these
contracts involves subtle trade-offs. Buyers receive noisy signals about
their future demand, but face genuine uncertainty when signing contracts.
Contracts must balance the benefits of early commitment against the costs of
potential misallocation. Moreover, real-world frictions like capital
constraints and spot market alternatives shape both the feasible and optimal
contract forms.

This paper provides a comprehensive theoretical analysis of optimal service
contracting under demand uncertainty. We consider a monopolist offering a
service or portfolio of services to a buyer over a certain time period. The
buyer can choose a level (or intensity) of the service to use. This level
can refer to either the quantity or the quality of services used.

In the initial period, the users have imperfect information about their
willingness to pay for the service, and thus they have an expectation over
the use and their willingness to pay, but do not know for sure. We model
this with the buyers having a prior distribution over the willingness to
pay. The users contract on the basis of knowing their prior estimate, but
acknowledging that their eventual value when buying is given by a realized
demand level. The seller can offer contracts in the initial period that
specify both upfront payments and future usage-based prices. As the
willingness to pay evolves over time, the contract, either implicitly or
explicitly, involves an element of sequential screening. Furthermore, given
the variable level of the good provided, the provider simultaneously employs
the tools of second degree price discrimination.

The first objective of the paper is to derive the optimal contract
(mechanism) to solve the allocation problem as a revenue maximizing solution
for the service provider. Our first main result characterizes the optimal
dynamic mechanism. We show that buyers who report higher signals should
receive more favorable usage rates, but must provide larger commitments.
This matches the structure of real-world cloud computing contracts, where
larger commitments come with higher discounts. The optimal mechanism can be
implemented in two ways that mirror common practice: (1) A two-part tariff,
which consists of an upfront fee combined with usage-based prices, where
higher upfront fees lead to lower usage payments, and (2) Committed spend
contracts, which combine a minimum purchase requirement with discounted
rates, where larger commitments yield greater discounts. Our results provide
both theoretical foundations for existing practices and practical guidance
for contract design in digital service markets.

These implementations are revenue-equivalent under standard conditions but
diverge when we introduce realistic frictions. Our second set of results
analyzes two key frictions: First, we consider the case of capital
constraints. When buyers face high costs of capital (making early payments
especially costly), pure committed spend contracts become optimal. This
helps explain why cloud providers often use committed spend rather than
upfront payments, particularly with startup customers. Second, we consider
spot market alternatives: When buyers have access to competitive spot
markets, the optimal contract offers larger discounts to high-demand buyers
while potentially excluding or distorting service for low-demand buyers.
This creates a form of market segmentation between committed and spot market
customers.

\subsection{Related Literature}

We offer a generalization of the sequential screening problem of \cite%
{coli00} to a continuous demand model rather than single-unit demand model.
The continuous demand model used is a version of \cite{muro78} and \cite%
{mari84}. A general class of socially efficient dynamic mechanisms is
provided by \cite{beva10} and \cite{atse13}. A related class of allocation
problems were analyzed under the objective of revenue maximization by \cite%
{past14} and \cite{best15}. A recent survey of this literature appears in 
\cite{beva19}. Given the presence of sequential screening, the optimal
(revenue maximizing) contract includes elements of an option, namely an
option fee and a strike price, that is a commitment price and usage price,
see \cite{best22}. A broader survey of the economics of cloud computing is
offered in \cite{bicm24}.

This paper is also related to a more general literature on mechanism design
with participation constraints stronger than those considered in the classic
work of \cite{myer81}. The structure of optimal mechanisms with
type-dependent outside options is considered in \cite{lesa89}. The
implications of stronger participation constraints specifically on the
optimal sequential screening contract are analyzed in \cite{krst15a} and 
\cite{becw20}.

\section{Model}

We consider a buyer and seller who operate in two periods $\tau \in \{0,1\}$%
. Trade occurs in period 1: the seller produces some quantity $q\geq 0$ of a
good at constant marginal cost $c\geq 0$, which is purchased by the buyer.
The buyer's utility is quasilinear in the transfer $t$, and displays
constant demand elasticity in the quantity $q$: 
\begin{equation*}
u(v,q,t)=vq^{\alpha }-t,\quad \alpha \in (0,1).
\end{equation*}%
The seller's payoff is the transfer minus their cost of production: 
\begin{equation*}
\Pi (q,t)=t-qc.
\end{equation*}%
In period 0, the buyer observes a one-dimensional signal $\theta \in \lbrack 
\underline{\theta },\overline{\theta }]\subset \mathbb{R}_{+}$ of their
value. Let $G(v\mid \theta )$ denote the distribution of $v$ conditional on $%
\theta $, and $F(\theta )$ the distribution of the initial signal. We assume
that the distribution of values is increasing in the signal $\theta $, in
the stochastic dominance sense:%
\begin{equation*}
G(v\mid \theta )<G(v\mid \theta ^{\prime }),\ \forall v,\forall \theta
^{\prime }<\theta \text{.}
\end{equation*}

\begin{assumption}[First Order Stochastic Domiance Ordering]
\label{asu:fosd} \qquad \newline
$G(\cdot \mid \theta )$ is ordered in first order stochastic dominance by $%
\theta $.
\end{assumption}

We think of $\theta$ as a parameter describing the firm's ``demand
projection''; it does not enter the firm's actual payoff, but provides
information about what the firm's realized demand in the next period will
be. Larger $\theta$ corresponds to more optimistic demands, which we capture
in the FOSD ordering of $G$.

To derive the optimal mechanism, we will need a form of regularity. Define
the \emph{dynamic virtual value} as 
\begin{equation}
\varphi (\theta ,v)=v+\frac{1-F(\theta )}{f(\theta )}\cdot \frac{\partial
G(v\mid \theta )/\partial \theta }{g(v\mid \theta )}.  \label{eq:dvv}
\end{equation}

\begin{assumption}[Regularity]
\label{asu:reg} \qquad \newline
$\varphi(\theta,v)$ is weakly increasing in both $\theta$ and $v$.
\end{assumption}

Assumptions \ref{asu:fosd}-\ref{asu:reg} allow us to apply the insights of
the earlier cited literature on sequential screening and dynamic mechanism
design, and focus on how to use these results to study the design of
committed spend and service contracts.

In this setting, a \emph{mechanism} (or contract) $\mathcal{M}:=(q,t)$
specifies, for each signal and value realization pair $(\theta ,v)$, a
quantity $q(\theta ,v)$ and transfer $t(\theta ,v)$. By the revelation
principle, it is without loss of generality to restrict attention to
mechanisms which are incentive compatible, meaning the buyer finds it
optimal to truthfully report $\theta $, and then $v$.

The seller's problem is thus to choose $q$ and $t$ to maximize expected
profit: 
\begin{equation*}
\max_{q,t}\ \E_{\theta \times v}\Big[t(\theta ,v)-q(\theta ,v)c\Big]
\end{equation*}%
subject to incentive compatibility (IC) in period $0$ and period $1:$ 
\begin{align}
& \E_{v}\Big[u(v,q(\theta ,v),t(\theta ,v)\big)\mid \theta \Big]\geq \E_{v}%
\Big[u\big(v,q(\theta ^{\prime },v),t(\theta ^{\prime },v)\big)\mid \theta %
\Big]\ \forall \theta ,\theta ^{\prime };  \tag{IC0}  \label{eq:IC0} \\
& u\big(v,q(\theta ,v),t(\theta ,v)\big)\geq u\big(v,q(\theta ,v^{\prime
}),t(\theta ,v^{\prime })\big)\ \forall \theta ,v,v^{\prime };  \tag{IC1}
\label{eq:IC1}
\end{align}%
and interim individual rationality in period $0:$%
\begin{equation}
\E_{v}\Big[u\big(v,q(\theta ,v),t(\theta ,v)\big)\mid \theta \Big]\geq 0\
\forall \theta \text{.}  \tag{IR}  \label{eq:IR}
\end{equation}%
Note that \eqref{eq:IC0} contains only a subset of the period-0 IC
constraints, namely deviation with respect to the initial report, but the
remainder of the deviations are automatically satisfied if \eqref{eq:IC1} is.

We believe this framework describes markets in which, prior to the actual
delivery of the services, buyers and seller establish future pricing based
on the buyer's expected demand, but this expected demand has no effect on
the seller's marginal costs. Our leading example is the cloud computing
service market, where buyers receive per-unit discounts for committing to
certain levels of use, although the setup just as easily describes many
service and digital resource allocation contracts. This model could also be
applied to physical goods, assuming that the marginal costs associated with
serving high- and low-demand customers are the same; note that the
solicitation of the demand forecast $\theta$ is used purely to screen
customers, and does not affect the choice of production technology.

\section{Optimal Dynamic Mechanism\label{sec:odm}}

We first determine the optimal dynamic direct mechanism. Then we consider
indirect implementations of the direct mechanism in form of a two-part
tariff and a committed spend contract.

\subsection{Dynamic Direct Mechanism\label{subsec:ddm}}

Following \cite{coli00} and \cite{past14}, the optimal mechanism can be
found with a Myersonian approach. For notational simplicity, denote by 
\begin{equation*}
u(\theta ,v):=u(\theta ,q(\theta ,v),t(\theta ,v)).
\end{equation*}
In period 1, the mechanism devolves to a standard static mechanism. That is, %
\eqref{eq:IC1} requires that 
\begin{equation*}
\frac{\partial }{\partial v}u(\theta ,v)=q(\theta ,v)^{\alpha }\implies
u(\theta ,v)=u(\theta ,\underline{v})+\int_{\underline{v}}^{v}q(\theta
,x)^{\alpha }\ dx.
\end{equation*}%
Using this, we can derive the first-period envelope condition: 
\begin{equation*}
\frac{d}{d\theta }\E_{v}\big[u(\theta ,v)\mid \theta \big]=\int_{\underline{v%
}}^{\overline{v}}u(\theta ,v)\frac{\partial g(v\mid \theta )}{\partial
\theta }\ dv=-\int_{\underline{v}}^{\overline{v}}q(\theta ,v)^{\alpha }\frac{%
\partial G(v\mid \theta )}{\partial \theta }\ dv.
\end{equation*}%
Let us consider the relaxed seller's problem which only imposes the local
\eqref{eq:IC0} constraints (along with \eqref{eq:IR} and all \eqref{eq:IC1} constraints). The objective
function in this relaxed problem, from standard mechanism design techniques,
is 
\begin{equation*}
\E_{\theta ,v}\Big[\varphi (\theta ,v)q^{\alpha }-cq\Big]-\E_{v}\Big[u(%
\theta,v)\mid \theta =\underline{\theta }\Big].
\end{equation*}%
Setting the IR constraint to be binding, the point-wise maximization is 
\begin{equation}
q^{\ast }(\theta ,v)=\left( \frac{\alpha }{c}\varphi (\theta ,v)\mathbb{I}%
_{[\varphi (\theta ,v)\geq 0]}\right) ^{\frac{1}{1-\alpha }},
\label{eq:opt-q}
\end{equation}%
where the indicator function $\mathbb{I}_{[\varphi (\theta ,v)\geq 0]}$
indicates that only customers with a positive virtual utility, $\varphi
(\theta ,v)$ as defined earlier in (\ref{eq:dvv}), are offered a positive
level of service:%
\begin{equation*}
\mathbb{I}_{[\varphi (\theta ,v)\geq 0]}=\left\{ 
\begin{array}{cc}
0, & \text{if }\varphi (\theta ,v)<0; \\ 
1, & \text{if }\varphi (\theta ,v)\geq 0.%
\end{array}%
\right.
\end{equation*}
The transfers are pinned down by the envelope conditions from before: 
\begin{align}
& u(\theta ,v)=u(\theta ,\underline{v})+\int_{\underline{v}}^{v}q^{\ast
}(\theta ,x)^{\alpha }\ dx,  \label{eq:u1} \\
& \E_{v}\Big[u(\theta ,v)\mid \theta \Big]=-\int_{\underline{\theta }%
}^{\theta }\int_{\underline{v}}^{\overline{v}}q^{\ast }(x,v)^{\alpha }\frac{%
\partial G(v\mid x)}{\partial x}\ dv\ dx.  \label{eq:opt-ir}
\end{align}

Under the regularity conditions imposed earlier, this mechanism satisfies
the global \eqref{eq:IC0} constraints, and hence is optimal.

\begin{theorem}[Optimal Dynamic Direct Mechanism]
\qquad \newline
Under Assumptions \ref{asu:fosd}-\ref{asu:reg}, the mechanism characterized
by \eqref{eq:opt-q}-\eqref{eq:opt-ir} is optimal.
\end{theorem}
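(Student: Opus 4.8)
The plan is to use the canonical relax--solve--verify strategy for sequential screening. I would work with the relaxed program that retains \eqref{eq:IR}, all of \eqref{eq:IC1}, and only the \emph{local} (first-order) form of \eqref{eq:IC0}, solve it in closed form --- which the discussion preceding the theorem essentially carries out --- and then show that the resulting candidate $(q^{\ast},t^{\ast})$ satisfies \emph{every} constraint of the original program. Because the relaxed program maximizes the same objective over a weakly larger feasible set, a relaxed optimum that turns out to be feasible for the original (more constrained) program is a fortiori optimal for it, which is exactly the content of the theorem.

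Before anything else I would record why the displayed constraints capture full incentive compatibility, as flagged in the note. Holding the period-$0$ report fixed at any $\theta'$, \eqref{eq:IC1} says truthful period-$1$ reporting is optimal for every true $v$; hence a \emph{double} deviation (misreport $\theta'$, then some $v''$) is dominated by the single deviation ``report $\theta'$, then report $v$ truthfully,'' so it suffices to rule out period-$0$ deviations with truthful continuation, which is \eqref{eq:IC0}. Next I would reconstruct the two envelope conditions. Since $vq^{\alpha}$ has strictly increasing differences in $(v,q)$, \eqref{eq:IC1} is equivalent to \eqref{eq:u1} together with $q(\theta,\cdot)^{\alpha}$ nondecreasing in $v$. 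Differentiating the interim utility $W(\theta):=\E_{v}[u(\theta,v)\mid\theta]$ and integrating by parts in $v$ --- the boundary terms vanishing because $\partial G/\partial\theta=0$ at $\underline v,\overline v$ --- yields $W'(\theta)=-\int q(\theta,v)^{\alpha}\,\partial_{\theta}G(v\mid\theta)\,dv$. Substituting $\E_{\theta}[W(\theta)]$ into $\E[t-qc]=\E[vq^{\alpha}-qc]-\E_{\theta}[W(\theta)]$ and applying Fubini produces the virtual-value objective with $\varphi$ as in \eqref{eq:dvv}; since Assumption~\ref{asu:fosd} gives $\partial_{\theta}G\le0$ and hence $W'\ge0$, the binding participation constraint is at $\underline\theta$, so I set $W(\underline\theta)=0$ to obtain \eqref{eq:opt-ir}. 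Pointwise maximization of $\varphi(\theta,v)q^{\alpha}-cq$ over $q\ge0$ --- concave because $\alpha\in(0,1)$ --- then delivers \eqref{eq:opt-q}, with $q^{\ast}=0$ exactly where $\varphi<0$.

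The substantive step, and the one I expect to be the main obstacle, is verifying that $(q^{\ast},t^{\ast})$ satisfies the \emph{global} constraints dropped in the relaxation. Global \eqref{eq:IC1} is immediate: by Assumption~\ref{asu:reg}, $\varphi(\theta,\cdot)$ is nondecreasing in $v$, so $q^{\ast}(\theta,\cdot)$ --- an increasing function of $\max\{\varphi(\theta,v),0\}$ --- is nondecreasing in $v$, which is precisely the monotonicity that upgrades \eqref{eq:u1} to full period-$1$ incentive compatibility. For global \eqref{eq:IC0}, write $R(\theta',\theta):=\E_{v}[u(\theta',v)\mid\theta]$; the local condition is $\partial_{\theta'}R(\theta',\theta)|_{\theta'=\theta}=0$, and integration by parts in $v$ gives $\partial_{\theta}R(\theta',\theta)=-\int q^{\ast}(\theta',v)^{\alpha}\,\partial_{\theta}G(v\mid\theta)\,dv$. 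I would then use the identity
\begin{equation*}
W(\theta)-R(\theta',\theta)=\int_{\theta'}^{\theta}\big[\Phi(s,s)-\Phi(\theta',s)\big]\,ds,\qquad \Phi(a,b):=-\int_{\underline v}^{\overline v} q^{\ast}(a,v)^{\alpha}\,\frac{\partial G(v\mid b)}{\partial b}\,dv,
\end{equation*}
obtained by splitting $W(\theta)-R(\theta',\theta)$ as $[W(\theta)-W(\theta')]+[R(\theta',\theta')-R(\theta',\theta)]$ and using $W'(s)=\Phi(s,s)$.

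To finish, Assumption~\ref{asu:reg} in the $\theta$ direction makes $q^{\ast}(\cdot,v)^{\alpha}$ nondecreasing, so $\Phi(s,s)-\Phi(\theta',s)=\int[q^{\ast}(s,v)^{\alpha}-q^{\ast}(\theta',v)^{\alpha}]\,(-\partial_{\theta}G(v\mid s))\,dv$ is the integral of a term that is nonnegative times $-\partial_{\theta}G\ge0$ (FOSD); hence the bracket shares the sign of $s-\theta'$, and the oriented integral from $\theta'$ to $\theta$ is nonnegative for every $\theta,\theta'$, which is global \eqref{eq:IC0}. The delicate points I would watch are the interchange of $v$- and $\theta$-operations at the threshold $\{\varphi=0\}$ where $q^{\ast}$ has a kink --- handled by working with the monotone object $q^{\ast\alpha}$ itself rather than its derivative --- and the vanishing of the boundary terms in each integration by parts. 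Granting these, the relaxed optimum is feasible for the full program, and the theorem follows.
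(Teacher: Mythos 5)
Your proposal is correct and follows essentially the same route as the paper: solve the relaxed program with only local period-0 IC, obtain \eqref{eq:opt-q}--\eqref{eq:opt-ir}, and then verify global \eqref{eq:IC0} using exactly the two ingredients the paper uses, namely that Assumption \ref{asu:reg} makes $q^{\ast}(\cdot,v)$ nondecreasing in the report and that Assumption \ref{asu:fosd} signs $\partial_{\theta}G\le 0$. The only (cosmetic) difference is in packaging the last step: the paper establishes single crossing by showing $w(\theta,\theta_H)-w(\theta,\theta_L)$ is a constant plus the $G(\cdot\mid\theta)$-expectation of an increasing function of $v$, whereas you sign the deviation payoff directly through the integral identity $W(\theta)-R(\theta',\theta)=\int_{\theta'}^{\theta}[\Phi(s,s)-\Phi(\theta',s)]\,ds$; both are valid and equivalent in substance.
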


\begin{proof}
The construction of this mechanism satisfies \eqref{eq:IR} and \eqref{eq:IC1}%
, so all we need is to show that the global \eqref{eq:IC0} constraints are
satisfied. We do this by showing that the indirect utility function 
\begin{equation*}
w(\theta,\theta^\prime) = \int_{\underline{v}}^{\overline{v}} \big(v
q^\ast(\theta^\prime,v) - t(\theta^\prime,v)\big) g(v \mid \theta)\ dv
\end{equation*}
satisfies the single-crossing property over $\theta$ and $\theta^\prime$
whenever $q(\theta^\prime,v)$ is increasing (pointwise) in $\theta^\prime$,
which is implied by Assumption \ref{asu:reg}.\footnote{%
We could apply \cite{past14} Corollary 1 of Theorem 3, but for completeness
we provide a direct proof.} In particular, we wish to show that 
\begin{equation*}
w(\theta,\theta_H) - w(\theta,\theta_L)
\end{equation*}
is increasing in $\theta$ for any $\theta_L \leq \theta_H$. For every $%
\theta^\prime$, the static mechanism $\{q(\theta^\prime,v),
t(\theta^\prime,v)\}$ is incentive compatible, which in particular by
standard results means that 
\begin{equation*}
v q^\ast(\theta^\prime,v) - t(\theta^\prime,v) = u(\theta^\prime,\underline{v%
}) + \int_{\underline{v}}^v q^\ast(\theta^\prime,x)^\alpha\ dx
\end{equation*}
for all $\theta$. That is, after reporting $\theta^\prime$ initially, the
payoff after $v$ realizes ex-post in period 1 is the same regardless of the
true $\theta$. Thus, we can decompose 
\begin{equation*}
w(\theta,\theta_H) - w(\theta,\theta_L) = \big(u(\theta_H,\underline{v}) -
u(\theta_L,\underline{v})\big) + \int_{\underline{v}}^{\overline{v}}
\left(\int_{\underline{v}}^v q^\ast(\theta_H,x)^\alpha -
q^\ast(\theta_L,x)^\alpha\ dx\right) g(v \mid \theta)\ dv.
\end{equation*}
$q(\theta^\prime,v)$ is increasing in $\theta^\prime$, so the inner integral
(the term in parentheses) is increasing in $v$. Thus, since $G(\cdot \mid
\theta)$ are ordered in FOSD, this means that the entire expression is
increasing in $\theta$.
\end{proof}

The optimal mechanism delivers, for every realized value of demand $v$,
higher quality to buyers with higher first-period signals $\theta$. However,
for low $v$, buyers with higher $\theta$ receive lower net utility. That is,
buyers who report more optimistic demand projections are penalized for
having low actual demand, but rewarded (relative to reporting a low demand
projection) when their realizations are high.

\begin{example}\label{ex:1}
We now introduce our running example for the remainder of the paper. Suppose that $v = \theta z$, where $z \sim U[\frac{1}{2},1]$ and $\theta
\sim U[1,2]$. Then, 
\begin{equation*}
\varphi(\theta,v) = 2z(\theta-1) = \frac{2v(\theta-1)}{\theta} \implies 
\frac{dt}{dq} = \frac{\theta}{2(\theta-1)}c.
\end{equation*}
The optimal mechanism for selected values of $\theta$ is plotted in Figure \ref{fig:opt-m}. Quantity provided, as function of the realized demand $v$, is plotted with solid lines, while total transfers are plotted with dashed lines.

\begin{figure}[ht]
    \centering
    \includegraphics[height=0.4\textwidth]{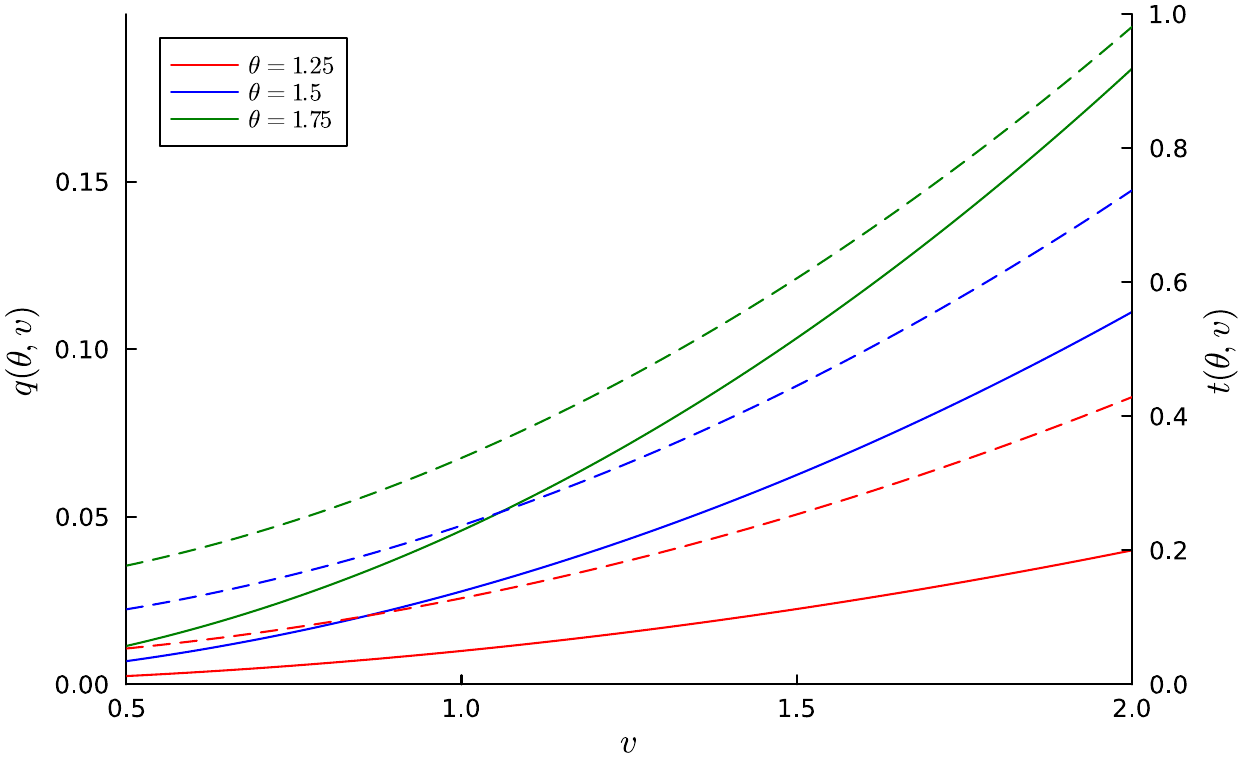}
    \caption{Optimal Mechanism for Example \ref{ex:1} ($q$ solid, $t$ dashed).}\label{fig:opt-m}
\end{figure}
\end{example}

\subsection{Two-Part Tariff\label{subsec:two}}

We now describe two natural implementations of the optimal direct mechanism
via indirect mechanisms which match contracts observed in reality. In this
section, we allow the mechanism to separate transfers across the two
periods, i.e.\ the contract specifies payments $t_{0}(\theta )$ and $%
t_{1}(\theta ,v)$ after periods 0 and 1, respectively. For now, we assume
the buyer and seller have no preference for payment timing, so that any $%
t_{0},t_{1}$ such that 
\begin{equation*}
t_{0}(\theta )+t_{1}(\theta ,v)=t(\theta ,v)\ \forall \theta ,v
\end{equation*}%
for some $t$ are revenue-equivalent.

For now, what follows are simply particular selections of ways to divide up
the transfers of the optimal mechanism; later, when we add in additional
design considerations, the choice of timing will matter.

A \emph{(non-linear) pricing schedule} is a function $p: \mathbb{R}_+
\rightarrow \mathbb{R}_+$ which specifies a set of possible quantities and a
price for each quantity. Observe that by definition, the price for each of
these quantities must be positive, i.e.\ this rules out mechanisms which
involve refunds from the seller to the buyer.\footnote{%
Such refunds might be desirable if the buyer has imperfect commitment power.
Then, charging large prices up-front and providing a refund in the second
period alleviates the lack of commitment. However, we rarely observe refunds
in practice.}

\begin{definition}[Two-Part Tariff]
\qquad \newline
A \emph{two-part tariff} is a contract consisting of an initial payment
function $t_0: [\underline{\theta},\overline{\theta}] \rightarrow \mathbb{R}%
_+$ and a collection of (type-dependent) pricing schedules $\{p_\theta\}$
such that for every $\theta$, 
\begin{equation*}
\min_{q \in \mathbb{R}_+} \big[p_\theta(q)\big] = 0.
\end{equation*}
That is, the buyer can always decline to make a payment in the second period.
\end{definition}

Let 
\begin{equation*}
\underline{v}(\theta ):=\inf \{v\mid g(v\mid \theta )>0\}
\end{equation*}%
denote the low end of the support of $G(\cdot \mid \theta )$. For any $%
\theta $ such that 
\begin{equation}
\varphi (\theta ,\underline{v}(\theta ))=0,  \label{eq:unique}
\end{equation}%
there is a unique\footnote{%
Up to choice of $p_{\theta }(q)$ which are never chosen in equilibrium.}
two-part tariff implementing the optimal mechanism. Otherwise, there is
indeterminacy in how the seller wants to split $t(\theta ,\underline{v}%
(\theta ))$ into the up-front payment and the price schedule. When %
\eqref{eq:unique} holds, we set 
\begin{equation}
t_{0}(\theta )=-u(\theta ,\underline{v})\text{ and }t_{1}(\theta
,v)=t(\theta ,v)+u(\theta ,\underline{v}).  \label{eq:upfront}
\end{equation}%
The non-linear price schedule for type $\theta $ is defined by: 
\begin{equation*}
p_{\theta }(q^{\ast }(\theta ,v))=t_{1}(\theta ,v).
\end{equation*}%
Since $u(\theta ,v)$ is increasing in $v$ by \eqref{eq:IC1}, and $p(q^{\ast
}(\theta ,\underline{v}))=0$, the construction clearly produces a valid
pricing schedule $p_{\theta }$ for every $\theta $. What remains is to check
that the initial payments $-u(\theta ,\underline{v})$ are all positive. By %
\eqref{eq:IR}, 
\begin{equation*}
\E_{v}\big[u(\underline{\theta },v)\big]=0\implies u(\underline{\theta },%
\underline{v})\leq 0,
\end{equation*}%
and by the period-0 envelope condition \eqref{eq:opt-ir}, 
\begin{equation*}
t_{0}(\theta )=\int_{\underline{v}}^{\overline{v}}\int_{\underline{v}%
}^{v}\left( \frac{\alpha }{c}\varphi (\theta ,x)\right) ^{\frac{\alpha }{%
1-\alpha }}\ dx\ g(v\mid \theta )\ dv+\int_{\underline{\theta }}^{\theta
}\int_{\underline{v}}^{\overline{v}}\left( \frac{\alpha }{c}\varphi
(x,v)\right) ^{\frac{\alpha }{1-\alpha }}\frac{\partial G(v\mid x)}{\partial
x}\ dv\ dx.
\end{equation*}%
Doing the usual integration by parts, the first term is equal to 
\begin{equation*}
\int_{\underline{v}}^{\overline{v}}\left( \frac{\alpha }{c}\varphi (\theta
,v)\right) ^{\frac{\alpha }{1-\alpha }}(1-G(v\mid \theta ))\ dv.
\end{equation*}%
Finally, we can differentiate to get 
\begin{equation*}
\frac{\partial }{\partial \theta }\Big[-u(\theta ,\underline{v})\Big]=\left( 
\frac{\alpha }{c}\right) ^{\frac{\alpha }{1-\alpha }}\int_{\underline{v}}^{%
\overline{v}}\frac{\alpha }{1-\alpha }\varphi (\theta ,v)^{\frac{2\alpha -1}{%
1-\alpha }}\left[ \frac{\partial }{\partial \theta }\varphi (\theta ,v)%
\right] (1-G(v\mid \theta ))\ dv\geq 0,
\end{equation*}%
showing that the initial payment is always positive. In fact, it is
increasing in $\theta $, which aligns with how these contracts look in
practice.

Observe that we can write 
\begin{equation*}
t_{1}(\theta ,v)=t(\theta ,v)+u(\theta ,\underline{v})=\left( \frac{\alpha }{%
c}\right) ^{\frac{\alpha }{1-\alpha }}\left( v\varphi (\theta ,v)^{\frac{%
\alpha }{1-\alpha }}-\int_{\underline{v}}^{v}\varphi (\theta ,x)^{\frac{%
\alpha }{1-\alpha }}\ dx\right) .
\end{equation*}%
Using implicit differentiation, we can compute the marginal price of
quantity: 
\begin{equation*}
\frac{dt}{dq}=\frac{\partial t(\theta ,v)/\partial v}{\partial q(\theta
,v)/\partial v}=\frac{cv}{\varphi (\theta ,v)}.
\end{equation*}%
From above, we can see that if 
\begin{equation*}
\frac{1}{v}\cdot \frac{\partial G(v\mid \theta )/\partial \theta }{g(v\mid
\theta )},
\end{equation*}%
is increasing in $v$ (note that the expression is negative, so decreasing in
absolute value), then the marginal price of quantity is decreasing in $v$. A
sufficient condition for this is that 
\begin{equation*}
\left\vert \frac{\partial G(v\mid \theta )/\partial \theta }{g(v\mid \theta )%
}\right\vert
\end{equation*}%
is decreasing in $v$, which is a form of a monotone hazard rate condition.

This two-part tariff mirrors contracts in which the buyer incurs some fixed
contracting fee, and then at the time of service realization faces a
non-linear pricing schedule. The pricing schedule becomes more generous when
they sign ``larger'' service contracts, where larger is both in the sense of
the fixed fee and the expected quantity of service provided.

A setting in which the optimal mechanism is particularly tractable is the
multiplicative values setting, in which $v=\theta z$, and $z\sim H$ is
independent of $\theta \sim F$. Then, 
\begin{equation*}
\varphi (\theta ,v)=\theta z-\frac{1-F(\theta )}{f(\theta )}z=z\varphi
_{F}(\theta ).
\end{equation*}%
Here, $\varphi _{F}$ is the usual one-dimensional virtual value. Assumption %
\ref{asu:reg} becomes the condition that 
\begin{equation*}
\varphi (\theta ,v)=v\cdot \frac{\varphi _{F}(\theta )}{\theta }
\end{equation*}%
is increasing in both $v$ and $\theta $ (when positive), which is true if
and only if $F$ satisfies monotone hazard rate, i.e.\ that 
\begin{equation*}
\frac{1-F(\theta )}{f(\theta )}
\end{equation*}%
is decreasing in $\theta $. The optimal contract can be implemented with an
up-front payment increasing in $\theta $ and a constant unit price of 
\begin{equation*}
\frac{\theta }{\varphi _{F}(\theta )}c,
\end{equation*}%
or, equivalently, a constant mark-up of 
\begin{equation*}
\frac{\theta }{\varphi _{F}(\theta )}-1.
\end{equation*}%
In this case, we see clearly that larger $\theta $ get more advantageous
prices in the second-stage, at the cost of a higher up-front payment.

The notable feature of this setting is that the second stage pricing
schedule is in fact linear, with a unit price which is independent of $H$.
This setting is related to that of \cite{best15}, who find that a geometric
Brownian motion structure simplifies incentive compatibility constraints in
the continuous dynamic mechanism design context. We work in the
multiplicative setting for Section \ref{sec:spot-market}, our extension
analyzing the optimal dynamic contract with spot markets.

Recall Example \ref{ex:1}, where we assumed $v = \theta z$, $z \sim U[\frac{1}{2},1]$, and $\theta
\sim U[1,2]$. Thus, the optimal contract can be implemented as a two-part tariff where in
period 1, the buyer faces a constant unit price of $\frac{2\theta}{\theta - 1%
}c$. In Figure \ref{fig:1}, we plot the up-front payment when $c = 1$ and $%
\alpha = \frac{1}{2}$, along with the corresponding unit prices associated with each $\theta$ (the parameters chosen make the scale of the two quite different, so they are plotted on parallel $y$ axes).

\begin{figure}[ht]
\centering
\includegraphics[height=0.4\textwidth]{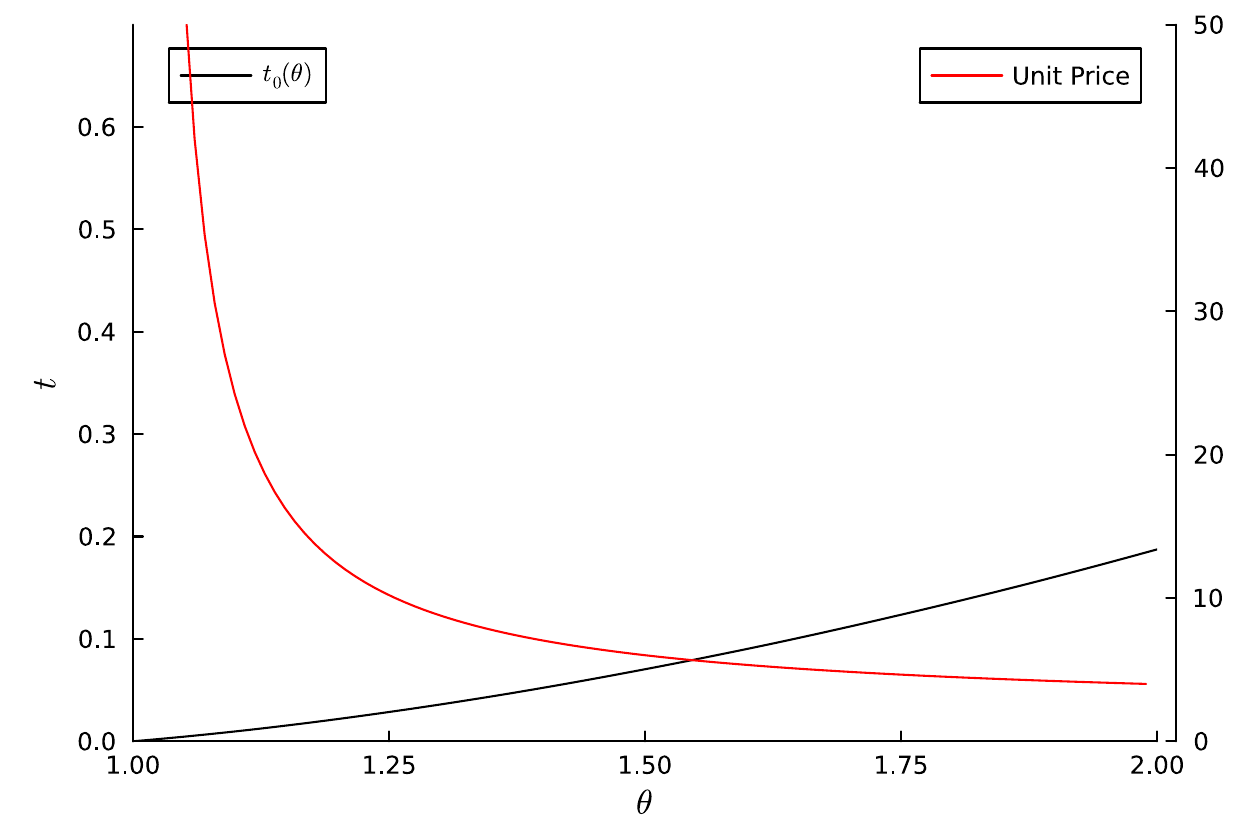}
\caption{$t_0(\protect\theta)$ and Unit Price for Example \ref{ex:1} when $c=1$ and $\protect\alpha %
= \frac{1}{2}$.}
\label{fig:1}
\end{figure}

In this example, because $\varphi(\theta,\underline{v}(\theta)) > 0$ for all 
$\theta$, the optimal two-part tariff is not unique. The up-front
payment are computed as in \eqref{eq:upfront}.

\subsection{Committed Spend Contract}

Another type of contract commonly observed in practice, which has recently
attracted significant regulatory attention, are contracts which do not
extract payments upfront but specify mandatory minimum spends.

\begin{definition}[Committed Spend Contract]
\qquad \newline
A \emph{committed spend contract} is a
collection of pricing schedules $\{p_\theta\}$ such that each type $\theta$
has a \emph{minimum spend} (sometimes called a \emph{minimum budget})
\begin{equation*}
B(\theta) := \min_{q \in \mathbb{R}_+} \big[p_\theta(q)\big]
\end{equation*}
which is strictly positive for all $\theta > \underline{\theta}$.
\end{definition}

Trivially, any two-part tariff can be converted into a committed spend
contract by combining the upfront payment with the original pricing
schedule. The committed spend contract, because it consists of a single
pricing schedule per type, is always unique (again, up to $p_{\theta }(q)$
which are not chosen in equilibrium). In particular, the committed spend
contract is characterized by 
\begin{equation*}
p_{\theta }(q^{\ast }(\theta ,v))=t(\theta ,v).
\end{equation*}%
The largest possible $B(\theta )$ in an optimal committed spend contract is 
\begin{equation}
B(\theta )=t(\theta ,\underline{v}(\theta )).  \label{eq:b}
\end{equation}%
The ``off-path'' values of $p_{\theta }(q)$
may technically lower the minimum spend, if they are below this value.

This committed spend contract, however, has the feature that the minimum
spend $B(\theta )$ may buy a quantity of 0. That is, the minimum spend is
exactly playing the part of the upfront payment in the two-part tariff, and
does not seem to be a ``true'' minimum
spend contract.

Contracts more similar to what we observe in reality, where the buyer can
always buy some quantity with any positive purchase, are optimal when the
buyer's virtual values are almost always high enough to justify making a
sale.

\begin{proposition}[Guaranteed Positive Quantity]
\label{prop:pos-q} \qquad \newline
There exists a committed spend contract implementing the optimal mechanism
such that, for every $\theta$, 
\begin{equation*}
B(\theta) > 0 \implies \min \Big\{q \mid p_\theta(q) \geq B(\theta)\Big\} > 0
\end{equation*}
if and only if for all $\theta > \underline{\theta}$, 
\begin{equation*}
\varphi(\theta,\underline{v}(\theta)) > 0.
\end{equation*}
\end{proposition}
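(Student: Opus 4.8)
The plan is to exploit the explicit form of the committed-spend schedule $p_\theta(q^\ast(\theta,v)) = t(\theta,v)$ and to locate the quantity that the minimum spend $B(\theta)$ actually purchases. First I would record two monotonicity facts. Since $\varphi$ is weakly increasing in $v$ (Assumption \ref{asu:reg}), the optimal quantity in \eqref{eq:opt-q} is nondecreasing in $v$ and equals $0$ exactly on the range $\{v:\varphi(\theta,v)\le 0\}$. Writing $t(\theta,v)=v\,q^\ast(\theta,v)^\alpha-u(\theta,v)$ and using the period-$1$ envelope $\partial u/\partial v = q^\ast(\theta,v)^\alpha$ from \eqref{eq:u1}, one gets $\partial t/\partial v = \alpha v\,q^\ast(\theta,v)^{\alpha-1}\,\partial q^\ast/\partial v \ge 0$ wherever $q^\ast>0$, while $t(\theta,\cdot)$ is constant (equal to $-u(\theta,\underline v)$) on the zero-quantity range. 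Hence $t(\theta,\cdot)$ is nondecreasing, the schedule $p_\theta$ is increasing in $q$, and its smallest value is attained at the smallest offered quantity $q^\ast(\theta,\underline v(\theta))$, with price $t(\theta,\underline v(\theta))$. The whole question therefore reduces to the sign of $\varphi(\theta,\underline v(\theta))$, which governs whether this smallest offered quantity is positive.

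For the ``if'' direction, suppose $\varphi(\theta,\underline v(\theta))>0$ for every $\theta>\underline\theta$. Then $q^\ast(\theta,\underline v(\theta))>0$ and every on-path quantity for type $\theta$ lies in $[q^\ast(\theta,\underline v(\theta)),\infty)$. I would construct the implementing schedule by offering exactly the on-path quantities $\{q^\ast(\theta,v)\}$ at the on-path prices, and assigning a prohibitively high price to every smaller quantity, in particular to $q=0$. Because every realized buyer optimally demands a quantity at least $q^\ast(\theta,\underline v(\theta))$, excluding the smaller quantities disturbs no incentive constraint, so this remains a valid implementation of the optimal mechanism. Then $B(\theta)=t(\theta,\underline v(\theta))$ is purchased only by the positive quantity $q^\ast(\theta,\underline v(\theta))$, giving $\min\{q\mid p_\theta(q)\ge B(\theta)\}>0$.

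For the ``only if'' direction I would argue the contrapositive. Suppose $\varphi(\theta_0,\underline v(\theta_0))\le 0$ for some $\theta_0>\underline\theta$. Then the optimal mechanism assigns quantity $0$ to every $v$ in the nondegenerate range $\{v:\varphi(\theta_0,v)\le 0\}$, each at the common transfer $t(\theta_0,\underline v(\theta_0))$, so $p_{\theta_0}(0)=t(\theta_0,\underline v(\theta_0))$. The lowest buyer, choosing $q$ to maximize $\underline v(\theta_0)\,q^\alpha - p_{\theta_0}(q)$, can find $q=0$ optimal only if $p_{\theta_0}(q)\ge \underline v(\theta_0)\,q^\alpha + t(\theta_0,\underline v(\theta_0)) \ge t(\theta_0,\underline v(\theta_0))$ for every offered $q$. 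Hence no quantity is priced below $t(\theta_0,\underline v(\theta_0))$, so $B(\theta_0)=t(\theta_0,\underline v(\theta_0))=-u(\theta_0,\underline v)>0$ (positivity following exactly as in the two-part-tariff analysis), and this minimum is attained at $q=0$. Thus $\min\{q\mid p_{\theta_0}(q)\ge B(\theta_0)\}=0$ while $B(\theta_0)>0$, and the desired property fails for every committed spend contract.

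The step I expect to be the main obstacle is this necessity direction, specifically ruling out ``clever'' off-path prices that might lower the minimum spend. One must show the minimum spend cannot be pushed below $t(\theta_0,\underline v(\theta_0))$ without either destroying the optimality of $q=0$ for the zero-quantity buyers or failing to implement the optimal allocation; the monotonicity of $t(\theta,\cdot)$ together with the single-deviation inequality for the lowest buyer is what closes this gap. The measure-zero boundary type $\theta=\underline\theta$, where $B$ need not be positive, is treated separately and does not affect the statement.
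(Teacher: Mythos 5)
Your proof is correct and follows essentially the same route as the paper's: both directions reduce to whether $q^\ast(\theta,\underline{v}(\theta))>0$, i.e.\ to the sign of $\varphi(\theta,\underline{v}(\theta))$, with the binding minimum spend identified as $B(\theta)=t(\theta,\underline{v}(\theta))$ as in \eqref{eq:b}. Your necessity direction is in fact slightly more careful than the paper's terse version, since the single-deviation inequality for the buyer at $\underline{v}(\theta_0)$ explicitly rules out off-path prices at positive quantities falling below $p_{\theta_0}(0)$, a point the paper leaves implicit in its discussion preceding the proposition.
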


\begin{proof}
($\Leftarrow$) Since $\varphi(\theta,\underline{v}(\theta)) > 0$, every type 
$\theta > \underline{\theta}$ always receives some allocation, the price of
which forms their largest minimum spend under \eqref{eq:b}. Note that we
exclude $\theta = \underline{\theta}$ from this condition because our
definition of a committed spend contract permits us to set $B(\underline{%
\theta}) = 0$.

($\Rightarrow$) If the condition is not satisfied, then there is no
allocation to $\underline{v}(\theta)$, and then hence by \eqref{eq:b} even
the largest possible minimum spend is not sufficient to purchase any
positive quality.
\end{proof}

This condition is exactly the same as the requirement that the optimal
two-part tariff is non-unique for every $\theta > \underline{\theta}$, as it
exactly involves the situation where the transfer is always strictly
positive (in equilibrium), producing some indeterminacy in how to split the
payments across periods.

Figure \ref{fig:2} plots the (largest) minimum spends $B(\theta)$ for
Example \ref{ex:1}, again with $c = 1$ and $\alpha = \frac{1}{2}$ and the unit price plotted on a parallel axis.

\begin{figure}[ht]
\centering
\includegraphics[height=0.4\textwidth]{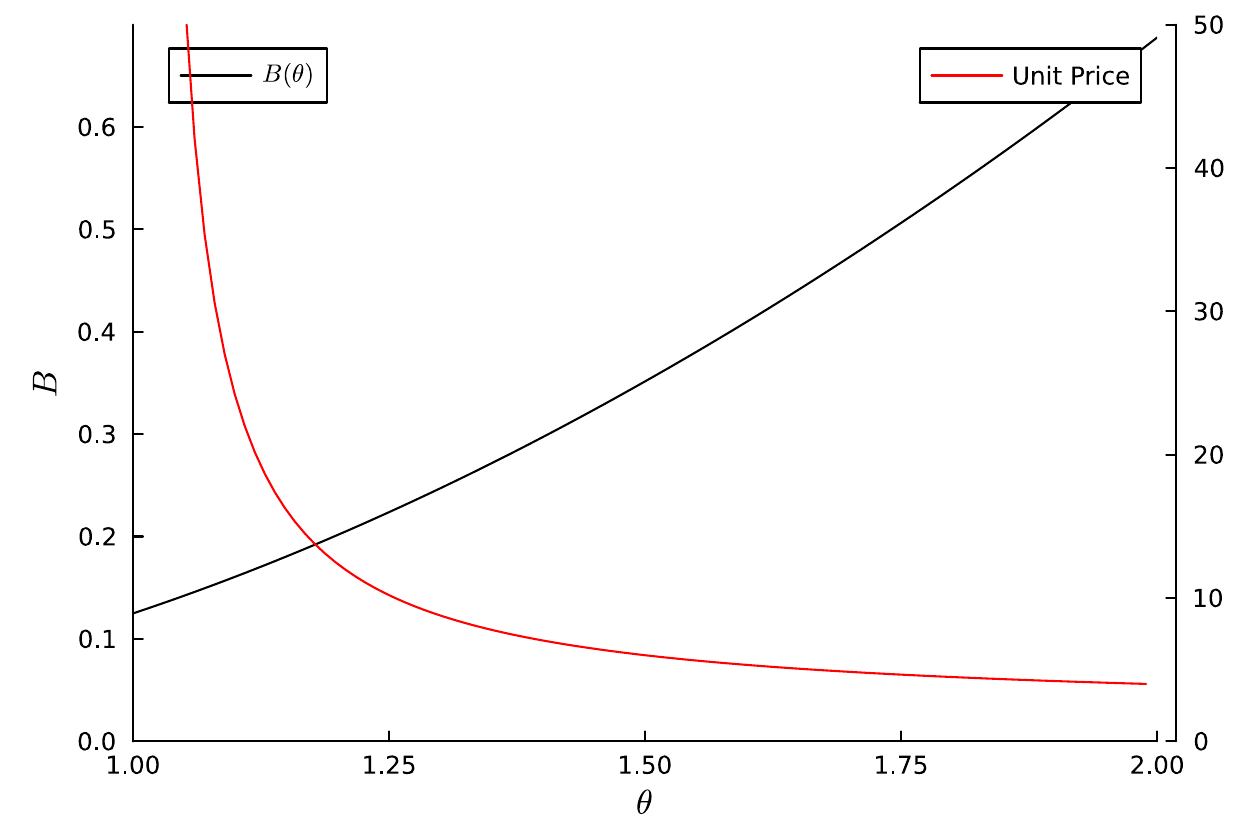}
\caption{$B(\protect\theta)$ and Unit Price for Example \ref{ex:1} when $c = 1$ and $\protect\alpha %
= \frac{1}{2}$.}
\label{fig:2}
\end{figure}

Unlike with the two-part tariff discussed previously, the optimal committed spend
contract is unlikely to exhibit linear pricing, as we would need the average
transfer to remain constant: 
\begin{equation*}
\frac{d}{dq}\left[ \frac{t(\theta ,\theta )}{q(\theta ,v)}\right] =0\iff 
\frac{d}{dq}\Big[t(\theta ,v)\Big]=\frac{t(\theta ,v)}{q(\theta ,v)}.
\end{equation*}%
That is, the marginal transfer must satisfy, for all $v\geq \underline{v}%
(\theta )$, 
\begin{equation*}
\frac{cv}{\varphi (\theta ,v)}=\frac{t(\theta ,\underline{v}(\theta ))}{%
q(\theta ,\underline{v}(\theta ))}=\frac{c}{\alpha }-\left( \frac{c}{\alpha }%
\right) ^{\frac{1}{1-\alpha }}\left[ \frac{u(\theta ,\underline{v}(\theta ))%
}{\underline{v}(\theta )^{\frac{1}{1-\alpha }}}\right] ,
\end{equation*}%
which is a knife-edge case.

\section{Contract Frictions}

In this section, we demonstrate how the solution of the dynamic mechanism is
affected by contractual frictions that arise in cloud computing and software
as a service. First, we introduce a cost of capital for providing payments
in the early period of the relationship. This is particularly relevant in
the cloud computing environment where many early stage companies with large
computing needs are credit constrained. Second, we consider the impact that
a functioning spot-market for computing services has for the designing and
pricing of the dynamic contract.

\subsection{Cost of Committed Capital}

A common reason for the use of a committed spend contract over a two-part
tariff is that early-stage firms, such as startups requiring the use of
computing power, are often very capital-constrained in the early period.
Then, the committed spend contract allows them to effectively make payments
to the seller, but drawn against their future (anticipated) revenues rather
than immediately at the time of contracting.

We can model this as making the timing of transfers matters to the buyer
(but not to the seller). In particular, the buyer incurs a linear penalty $%
\gamma$ for payments made in the early period: 
\begin{equation*}
u(v,q,t) = v q^\alpha - (1+\gamma\mathbb{I}_{[t_0 \geq 0]})t_0 - t_1, \quad
\gamma \geq 0.
\end{equation*}
This is also equivalent to a model with time discounting where the seller is
more patient than the buyer.

The optimal mechanism is still implementable in this framework by
backloading all the transfers to period 1. This exactly selects the
committed spend contract, with minimum spend given by \eqref{eq:b}, as the
optimal contract. In fact, any other mechanism which implements the optimal
contract must involve negative transfers on-path, which is effectively the
seller providing lending to the buyer.

\begin{proposition}[Optimal Contract with Commitment Cost]
\qquad \newline
For any $\gamma > 0$, the committed contract with $B(\theta)$ given by %
\eqref{eq:b} is optimal. Furthermore, it is the only optimal contract where
(on-path) transfers are always positive.
\end{proposition}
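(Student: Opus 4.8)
The plan is to establish two facts: (i) the backloaded contract $t_0\equiv 0$, $t_1 = t$, where $t$ is the transfer of the optimal mechanism in Theorem~1, already attains the frictionless optimal revenue, which I denote $R^\ast$; and (ii) no friction-feasible contract can exceed $R^\ast$, with equality forcing $t_0\le 0$ almost surely. Together with the requirement that on-path transfers be positive, (ii) will pin down the committed spend contract uniquely.

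First I would record the period-1 structure. Since $t_0(\theta)$ is sunk by period~1, the period-1 constraint \eqref{eq:IC1} is unchanged by the friction, so truthful reporting of $v$ after any period-0 report remains optimal exactly as in the proof of Theorem~1. Hence I can write the interim payoff of a type $\theta$ who reports $\theta'$ (and then truthfully) as
\[
V(\theta,\theta') = \E_v\!\big[v\,q(\theta',v)^\alpha - t_1(\theta',v)\,\big|\,\theta\big] - \big(1+\gamma\,\mathbb{I}_{[t_0(\theta')\ge 0]}\big)t_0(\theta').
\]

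The key step, and the one I expect to be the main obstacle, is ruling out that a strictly positive up-front payment relaxes the period-0 constraints enough to raise revenue, since money-burning devices can in principle help a screening principal. I would dispatch this with a change of variables: define the \emph{effective transfer} $\hat t(\theta,v) := t(\theta,v) + \gamma\max\{t_0(\theta),0\}$, where $t=t_0+t_1$. Then $V(\theta,\theta') = \E_v[v q(\theta',v)^\alpha - \hat t(\theta',v)\mid\theta]$, so the friction's \eqref{eq:IC0}, \eqref{eq:IC1}, and \eqref{eq:IR} are \emph{exactly} the frictionless constraints for the mechanism $(q,\hat t)$. The point is that the penalty $\gamma\max\{t_0(\theta'),0\}$ depends only on the report $\theta'$, so it shifts truth-telling and mimicking payoffs identically and cannot usefully slacken incentive compatibility. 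Because the seller's revenue is $\E[t-cq]=\E[\hat t - cq]-\gamma\,\E[\max\{t_0,0\}]$, and $\E[\hat t - cq]\le R^\ast$ by optimality of the frictionless mechanism applied to $(q,\hat t)$, I obtain
\[
\text{Revenue} \;\le\; R^\ast - \gamma\,\E[\max\{t_0,0\}] \;\le\; R^\ast .
\]

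Finally I would read off both conclusions from the equality conditions. The backloaded contract ($t_0\equiv 0$) gives $\hat t = t$, attains $\E[\hat t - cq]=R^\ast$ with zero burning, and is therefore optimal; by \eqref{eq:b} its minimum spend is $B(\theta)=t(\theta,\underline{v}(\theta))$. Conversely, equality in the display forces $\E[\max\{t_0,0\}]=0$ (so $t_0\le 0$ a.s.) together with $(q,\hat t)$ frictionless-optimal, which fixes $q=q^\ast$ and, on path, $\hat t=t^\ast$; since $t_0\le 0$ gives $\hat t=t$, this means $t=t^\ast$. Imposing in addition that every on-path transfer is positive gives $t_0(\theta)\ge 0$, which with $t_0\le 0$ yields $t_0\equiv 0$ and hence $t_1=t^\ast$, i.e.\ precisely the committed spend contract; any other optimal contract has $t_0(\theta)<0$ on a positive-measure set, a negative (lending) transfer on path. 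It remains only to confirm that this contract does have positive transfers, i.e.\ $t^\ast(\theta,v)\ge 0$ on path, which follows since $B(\theta)=t^\ast(\theta,\underline{v}(\theta))\ge 0$ (using $-u(\theta,\underline{v})\ge 0$ from the two-part-tariff analysis) and $t^\ast$ is increasing in $v$.
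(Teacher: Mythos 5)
Your proposal is correct, and it is actually more complete than what the paper provides: the paper offers no formal proof of this proposition, only the two-sentence textual argument that backloading all transfers to period 1 implements the optimal mechanism at no penalty, and that any other implementation of that mechanism requires negative (lending) transfers on path. What the paper's argument leaves implicit --- and what you correctly flag as the main obstacle --- is that the friction cannot be exploited to push revenue \emph{above} the frictionless benchmark $R^\ast$, since the surcharge $\gamma t_0$ changes the buyer's payoffs and hence could in principle act as a type-dependent money-burning device that relaxes the period-0 incentive constraints. Your change of variables $\hat t = t + \gamma\max\{t_0,0\}$ disposes of this cleanly: because the surcharge depends only on the period-0 report, the frictional constraint set for $(q,t_0,t_1)$ maps bijectively onto the frictionless constraint set for $(q,\hat t)$, giving $\text{Revenue} = \E[\hat t - cq] - \gamma\E[\max\{t_0,0\}] \le R^\ast$, with equality forcing both $t_0\le 0$ a.s.\ and $(q,\hat t)$ frictionless-optimal. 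The uniqueness claim then falls out of the equality conditions exactly as you describe (transfers on path are pinned down by the allocation, the envelope formulae \eqref{eq:u1}--\eqref{eq:opt-ir}, and binding \eqref{eq:IR} at $\underline\theta$, so $\hat t = t = t^\ast$ on path; positivity of on-path transfers plus $t_0\le 0$ gives $t_0\equiv 0$). The only cosmetic point worth tightening is the final positivity check: nonnegativity of $t^\ast$ on path follows from the decomposition $t^\ast(\theta,v) = -u(\theta,\underline v) + \big(t^\ast(\theta,v)+u(\theta,\underline v)\big)$, both pieces of which the two-part-tariff analysis shows are nonnegative, which is essentially what you invoke. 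In short: same destination as the paper, but you supply the upper-bound argument the paper takes for granted.
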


Here, there is a social benefit to allowing the buyer to backload payments,
but of course the seller is only willing to allow this if the buyer can
commit to eventually making those payments.

\subsection{Sequential Screening in the Presence of Spot Markets\label%
{sec:spot-market}}

We now consider the interaction between dynamic contract and spot markets.
Suppose that in period 1, buyers have access to a liquid \emph{spot market},
which provides any quantity of the good at constant price $p^{S}>c$. Now,
buyers who sign a dynamic contract in period 0 are bound to it. However,
they anticipate the existence of a future spot market, and do not sign
contracts which deliver expected payoffs worse than participating in the
future spot market. Mathematically, this is equivalent to enhancing the IR
constraint of the buyers. The current setting thus introduces the
possibility that the computing services may be procured through a variety of
sources and market. \cite{stsh21} offer a more comprehensive view of how the
interaction between ``cloud'' and ``sky'' computing may enhance the efficiency
and competitiveness of computing services.

In this extension, we work in the multiplicative setting: that is, for the
committed buyers, gross utility is $\theta z$, where $z\sim H$ is
independent of $\theta \sim F$. The seller's problem is to design a
mechanism $(q,t)$ which maximizes profit subject to \eqref{eq:IC0}-%
\eqref{eq:IC1} and the new IR constraint: 
\begin{equation*}
\E\Big[u(v,q(\theta ,v),t(\theta ,v))\mid \theta \Big]\geq u^{S}(\theta ):=\E%
_{v}\left[ \max_{q}\Big\{vq^{\alpha }-p^{S}q\Big\}\mid \theta \right] ,\
\forall \theta .
\end{equation*}%
We provide a partial characterization of the constrained optimal mechanism, $%
(q^{D},t^{D})$. There exists a cutoff $\theta ^{\ast }$ such that above $%
\theta ^{\ast }$, buyers receive the same allocations as they do in the
original optimal mechanism, $q^{\ast }(\theta ,v)$, but at a discounted
price. For $\theta <\theta ^{\ast }$, the allocation (and prices) are
distorted relative to $(q^{\ast },t^{\ast })$.

\begin{proposition}[Optimal Contract with Spot Market]
\qquad \newline
With multiplicative values, $q^{D}(\theta ,\cdot )=q^{\ast }(\theta ,\cdot )$
if and only if $\theta \geq \theta ^{\ast }$, where $\theta ^{\ast }$ is
determined by 
\begin{equation*}
\frac{\varphi _{F}(\theta ^{\ast })}{\theta ^{\ast }}=\frac{c}{p^{S}}.
\end{equation*}%
Furthermore, when $\theta \geq \theta ^{\ast }$, 
\begin{equation*}
t^{D}(\theta ,v)=t^{\ast }(\theta ,v)-t_{c},
\end{equation*}%
where 
\begin{equation*}
t_{c}=u^{S}(\theta ^{\ast })-u^{\ast }(\theta ^{\ast })\geq 0
\end{equation*}%
is a constant that depends only on $\theta ^{\ast }$.
\end{proposition}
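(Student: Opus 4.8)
The plan is to read this off as a sequential-screening problem whose only new feature is a \emph{type-dependent} participation floor, and to solve it by optimal control. First I would pin down the outside option. A buyer who faces the competitive price $p^{S}$ chooses $q^{S}(v):=(\alpha v/p^{S})^{1/(1-\alpha)}$, and $s(v):=\max_{q}\{vq^{\alpha}-p^{S}q\}$ satisfies $s'(v)=q^{S}(v)^{\alpha}$ by the envelope theorem, so $u^{S}(\theta)=\E_{v}[s(v)\mid\theta]$. A standard integration by parts (equivalently, in the multiplicative model, the change of variables $v=\theta z$) gives the identity
\[
u^{S\prime}(\theta)=-\int_{\underline v}^{\overline v} q^{S}(v)^{\alpha}\,\frac{\partial G(v\mid\theta)}{\partial\theta}\,dv,
\]
so the floor grows at exactly the rate at which a buyer would accrue rent under the spot allocation. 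Since the baseline optimal quantity is $q^{\ast}(\theta,v)=(\alpha v/p^{M}(\theta))^{1/(1-\alpha)}$ with monopoly unit price $p^{M}(\theta)=c\theta/\varphi_{F}(\theta)$, we have $q^{\ast}(\theta,\cdot)\gtrless q^{S}(\cdot)$ according as $p^{M}(\theta)\lessgtr p^{S}$, i.e.\ as $\varphi_{F}(\theta)/\theta\gtrless c/p^{S}$; because $\varphi_{F}(\theta)/\theta$ is increasing under Assumption~\ref{asu:reg}, this defines the unique crossing type $\theta^{\ast}$ through $\varphi_{F}(\theta^{\ast})/\theta^{\ast}=c/p^{S}$, at which $q^{\ast}(\theta^{\ast},\cdot)=q^{S}(\cdot)$.

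Next I would recast the seller's problem exactly as in the derivation preceding Theorem~1: profit equals $\E_{\theta,v}[\varphi(\theta,v)q^{\alpha}-cq]-U(\underline\theta)$, where $U(\theta):=\E_{v}[u(\theta,v)\mid\theta]$ satisfies the period-$0$ envelope law $U'(\theta)=-\int q(\theta,v)^{\alpha}\,\partial_{\theta}G\,dv$ coming from \eqref{eq:IC0}; the only change is that \eqref{eq:IR} is replaced by the stronger floor $U(\theta)\ge u^{S}(\theta)$ for all $\theta$. I treat this as an optimal-control problem with state $U$, control $q(\theta,\cdot)$, costate $\lambda(\theta)$, and a nonnegative multiplier $\mu(\theta)$ for the state constraint. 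Pointwise maximization of the Hamiltonian yields $q(\theta,v)=\big(\tfrac{\alpha}{c}\,\varphi^{\lambda}(\theta,v)\big)^{1/(1-\alpha)}$ on the region where $\varphi^{\lambda}\ge 0$, with the \emph{shadow virtual value} $\varphi^{\lambda}(\theta,v):=v-\tfrac{\lambda(\theta)}{f(\theta)}\,\partial_{\theta}G/g$, together with the adjoint equation $\lambda'(\theta)=f(\theta)-\mu(\theta)$ and transversality $\lambda(\overline\theta)=0$. Taking $\mu\equiv 0$ returns $\lambda=-(1-F)$ and $\varphi^{\lambda}=\varphi$, reproducing $q^{\ast}$ and the formula \eqref{eq:opt-q}; this is the consistency check with Theorem~1.

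Integrating the adjoint equation gives $\lambda(\theta)=-(1-F(\theta))+M(\theta)$ with $M(\theta):=\int_{\theta}^{\overline\theta}\mu\ge 0$, so that $q^{D}(\theta,\cdot)=q^{\ast}(\theta,\cdot)$ \emph{iff} $M(\theta)=0$, i.e.\ iff no participation multiplier sits above $\theta$. Writing $\theta^{\ast}$ for the top of the support of $\mu$ therefore delivers the claimed equivalence $q^{D}=q^{\ast}\iff\theta\ge\theta^{\ast}$, provided I can identify this $\theta^{\ast}$ with the crossing type of the first paragraph. To do so I would use the multiplicative structure, in which $\varphi^{\lambda}(\theta,v)=v\big(1+\lambda/(\theta f)\big)$ is proportional to $v$, so every arc prices linearly; on the binding arc $U\equiv u^{S}$ and, since $\E_{v}[\max_{q}\{vq^{\alpha}-pq\}\mid\theta]$ is proportional to $\theta^{1/(1-\alpha)}$ with a $p$-dependent constant, matching the floor forces the effective unit price to equal $p^{S}$ and hence $q^{D}(\theta,\cdot)=q^{S}(\cdot)$, flat in $\theta$. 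This arc meets the interior solution $q^{\ast}$ continuously precisely where $q^{\ast}(\theta^{\ast},\cdot)=q^{S}(\cdot)$, i.e.\ at $\varphi_{F}(\theta^{\ast})/\theta^{\ast}=c/p^{S}$; monotonicity of $q^{D}$ (flat, then increasing) then holds automatically, so the single-crossing argument of Theorem~1 still certifies global \eqref{eq:IC0}.

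Finally, for $\theta\ge\theta^{\ast}$ the allocation is $q^{\ast}$, and \eqref{eq:IR} binds at $\theta^{\ast}$ but is slack above: since $q^{\ast}(\theta,\cdot)>q^{S}(\cdot)$ there, the identity of the first paragraph gives $U^{D\prime}>u^{S\prime}$, so $U^{D}-u^{S}$ rises from $0$ at $\theta^{\ast}$. Hence $U^{D}(\theta)=U^{\ast}(\theta)+t_{c}$ on $[\theta^{\ast},\overline\theta]$ with $t_{c}:=u^{S}(\theta^{\ast})-u^{\ast}(\theta^{\ast})$, and $t_{c}\ge 0$ because $t_{c}=\max_{\theta}[u^{S}(\theta)-R^{\ast}(\theta)]\ge u^{S}(\underline\theta)\ge 0$, where $R^{\ast}$ is the baseline accumulated rent. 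The period-$1$ envelope \eqref{eq:IC1} makes $u^{D}(\theta,v)-u^{\ast}(\theta,v)$ independent of $v$, hence equal to its mean $t_{c}$, so $t^{D}(\theta,v)=vq^{\ast}(\theta,v)^{\alpha}-u^{D}(\theta,v)=t^{\ast}(\theta,v)-t_{c}$, as claimed. I expect the real work to be in the third paragraph: showing rigorously that the binding set is an interval with $\mu\ge 0$ (so that no atom forces a downward, monotonicity-violating jump in $q^{D}$, and no ironing is needed), and hence that the boundary and interior arcs join exactly at the price-crossing type rather than at some ironed cutoff.
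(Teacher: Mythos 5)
Your proof reaches the correct conclusions and, at its core, rests on the same computation as the paper's: applying the period-0 envelope formula to both the mechanism's rent and the spot-market rent, and observing that in the multiplicative setting their ratio is $\left(\tfrac{p^{S}}{c}\cdot\tfrac{\varphi_{F}(\theta)}{\theta}\right)^{\alpha/(1-\alpha)}$, which is increasing under Assumption \ref{asu:reg} and crosses $1$ exactly at $\varphi_{F}(\theta^{\ast})/\theta^{\ast}=c/p^{S}$ (this is \eqref{eq:dt-1} versus \eqref{eq:dt-2} in the paper). What you do differently is the packaging: the paper argues by directly perturbing $(q^{\ast},t^{\ast})$ --- shift rents to meet the floor, then distort wherever the floor's growth rate exceeds the mechanism's --- whereas you set up a Jullien/Lewis--Sappington-style control problem with a costate and a multiplier on the type-dependent participation floor, identify $q^{D}=q^{\ast}$ with the absence of multiplier mass above $\theta$, and characterize $\theta^{\ast}$ equivalently as the type at which the implied monopoly unit price $c\theta/\varphi_{F}(\theta)$ crosses $p^{S}$, i.e.\ where $q^{\ast}(\theta,\cdot)$ crosses $q^{S}(\cdot)$. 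The Lagrangian route buys a cleaner statement of \emph{why} the binding region lies below $\theta^{\ast}$ and a sharper (if heuristic) prediction that the binding arc replicates the spot allocation; your observation that $u^{S}-u^{\ast}$ is single-peaked at $\theta^{\ast}$, which delivers $t_{c}=\max_{\theta}[u^{S}(\theta)-u^{\ast}(\theta)]\geq u^{S}(\underline{\theta})\geq 0$, actually justifies a sign claim the paper merely asserts. The one step you flag as open --- that the multiplier's support is an interval, no ironing is needed, and the two arcs join at the price-crossing type --- is precisely the step the paper also leaves informal (it explicitly declines to characterize the mechanism below $\theta^{\ast}$ and argues only that the "necessary distortion is declining in $\theta$"), so your proposal is no less complete than the published argument on this point. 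One minor caution: on the binding arc it is the matching of \emph{derivatives} $U'=u^{S\prime}$ combined with the linear-pricing form of the Hamiltonian maximizer that pins the effective unit price to $p^{S}$, not the matching of levels; state it that way to avoid the appearance of inferring a pointwise allocation from a single integral constraint.
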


Recall that under the monotone hazard rate assumption (which is Assumption %
\ref{asu:reg} in the multiplicative setting), the ratio of $\varphi
_{F}(\theta )$ to $\theta $ is monotonically increasing, and reaches 1 at $%
\theta =\overline{\theta }$. Thus, $\theta ^{\ast }$ is larger the closer
that $p^{S}$ is to $c$ (i.e.\ the more competitive the spot market is).

So, sufficiently high $\theta$ benefit directly from the spot market, in
terms of a flat discount $t_c$, while low $\theta$ have their allocation
distorted away from the original optimal mechanism. Note that this
distortion may ultimately increase or decrease social efficiency, depending
on the parameters of the problem. This formalizes the notion that having
liquid, functional spot markets is beneficial not only in terms of putting
price pressure on the commitment contract, but also potentially improving
the allocative efficiency.

Formally, the optimal mechanism features no exclusion, meaning that every
buyer participates. However, in practice, there may be (low type) buyers for
whom the optimal mechanism simply replicates the spot market. These buyers
are \emph{de facto} excluded from the mechanism, in that they receive no
per-unit discounts for reporting their type, and are not asked to make any
up-front commitments; in contrast, in the absence of the spot market, all
buyer types sign a non-trivial dynamic contract.

\begin{proof}
First suppose the seller wants to design the optimal contract which captures
the entire market. We modify the original optimal mechanism $(q^{\ast
},t^{\ast })$ in two ways. First, we shift $u(\underline{\theta })$ to be
equal to the outside option, 
\begin{equation*}
t(\theta ,v)=t^{\ast }(\theta ,v)+\int_{\underline{v}}^{\overline{v}%
}u^{S}(v)g(v\mid \underline{\theta })\ dv,
\end{equation*}%
where $u^{S}(v)$ is the net payoff of a buyer of type $v$ participating in
the spot market.

Next, for any values of $\theta$ such that 
\begin{equation*}
\frac{d}{d\theta} \E\big[u^\ast(\theta)\big] < \frac{d}{d\theta} \E\big[%
u^S(\theta)\big],
\end{equation*}
we distort the allocations until this inequality is an equality, subject to
the global \eqref{eq:IC0} constraints. The exact nature of this distortion
is a complicated multi-dimensional mechanism design problem, which is why we
cannot provide a description in general of what the mechanism looks like
when $\theta < \theta^\ast$.

However, we can compute the $\theta ^{\ast }$ above which no distortion is
necessary. Observe that, by the multiplicative structure, 
\begin{equation*}
\frac{\partial G(v\mid \theta )}{\partial \theta }=-\frac{z}{\theta }%
h(z),\quad \varphi (\theta ,v)=\varphi _{F}(\theta )z.
\end{equation*}%
Hence, 
\begin{equation}
\frac{d}{d\theta }\E\big[u^{\ast }(\theta )\big]=-\frac{1}{\theta ^{2}}\int_{%
\underline{z}}^{\overline{z}}\left( \frac{\alpha }{c}\varphi _{F}(\theta
)z\right) ^{\frac{\alpha }{1-\alpha }}zh(z)\ dz.  \label{eq:dt-1}
\end{equation}%
The spot market clearly defines a dynamic mechanism over $(\theta ,v)$ which
satisfies all IC constraints (including the local ones), so we can apply %
\eqref{eq:u1} to it as well. In particular, a buyer with realized type $%
v=\theta z$ participating in the spot market purchases 
\begin{equation*}
q(\theta ,z)=\left( \frac{\alpha z\theta }{p^{S}}\right) ^{\frac{1}{1-\alpha 
}}
\end{equation*}%
and thus
\begin{equation}\label{eq:dt-2}
    \frac{d}{d\theta }\E\big[u^{S}(\theta )\big]=-\frac{1}{%
\theta ^{2}}\int_{\underline{z}}^{\overline{z}}\left( \frac{\alpha z\theta }{%
p^{S}}\right) ^{\frac{\alpha }{1-\alpha }}zh(z)\ dz.
\end{equation}%
The ratio of \eqref{eq:dt-1} to \eqref{eq:dt-2} is 
\begin{equation*}
\left( \frac{p^{S}}{c}\cdot \frac{\varphi _{F}(\theta )}{\theta }\right) ^{%
\frac{\alpha }{1-\alpha }}
\end{equation*}%
By assumption, this is increasing in $\theta $, so the necessary distortion
is declining with $\theta $. Hence, once 
\begin{equation*}
\frac{\varphi _{F}(\theta )}{\theta }=\frac{c}{p^{S}},
\end{equation*}%
distortion is no longer necessary. Since above $\theta ^{\ast }$, the
allocations are equal to $q^{\ast }$, the transfers must also be equal to $%
t^{\ast }$ plus a constant, which is the constant necessary to make the IR
constraint binding at $\theta ^{\ast }$. Below $\theta ^{\ast }$, at least
some distortion is necessary, or the IR constraint would not be met.

Finally, no exclusion is ever optimal since the seller can at worst
replicate the spot market for low types, and with $p^S > c$ this is always
profitable.
\end{proof}

It is apparent from the upper bound on $\theta^\ast$ that a more competitive
spot market makes the distortion relative to $q^\ast$ more severe, and in
the limit where $p^S = c$ the seller of course cannot do anything except
replicate the spot market.

\section{Conclusion}

There is a growing regulatory interest in the potentially anti-competitive
effects of committed spend contracts. Our model makes clear that the ability
to force buyers to commit to future payment schemes obviously benefits the
seller, but also benefits high type buyers, at the expense of low-type ones.
We could analyze the overall benefit of allowing these contracts by
comparing with the optimal mechanism under ex-post implementability
constraints (which captures the lack of commitment from the buyer),
extending work such as \cite{chel02} and \cite{becw20}.

The setup of Section \ref{sec:spot-market} also leaves open the possibility of
modeling more explicitly competition in the spot market, as well what
happens when the seller is also the designer of the spot market, which is
common in practice. There, we expect that the seller would often find it
optimal to degrade the spot market in order to extract additional rents from
the commitment contract.

Our analysis has several implications for ongoing policy debates about
committed spend agreements. First, it suggests that while committed spend
contracts can improve efficiency by enabling better capacity planning and
reducing uncertainty, they may indeed create switching costs that reduce
ex-post competition. However, the welfare impacts depend crucially on market
structure---in monopolistic markets, commitments mainly affect rent
extraction rather than efficiency. Second, committed spend contracts can be
particularly valuable when capital markets are imperfect, as they allow
resource-constrained firms to effectively borrow against future service
usage. Policy interventions should consider these financial market
interactions. Third, the presence of liquid spot markets can discipline
long-term contract prices and improve overall market efficiency. Regulators
might focus on ensuring robust spot market development rather than directly
restricting committed spend agreements. Finally, our model suggests that
optimal commitment periods should be linked to the precision of demand
signals. This provides a framework for evaluating whether particular
contract durations are anticompetitive.

\newpage

\bibliographystyle{econometrica}
\bibliography{general}

\end{document}